\documentclass[a4paper,11pt,reqno]{amsart}

\usepackage[left=2.2cm,right=2.5cm,top=3cm,bottom=3cm]{geometry}

\usepackage{amsmath}
\usepackage{amssymb}
\usepackage{amsthm}
\usepackage{refcount}

\usepackage[dvips]{color}
\usepackage{pgf,tikz}
\usetikzlibrary{shapes,automata,positioning,arrows}
\usepackage{color}
\usepackage{graphicx}
\usepackage{setspace}
\usepackage[version=3]{mhchem}
\usepackage{chemfig} 
\usepackage[sort&compress,comma,square,numbers]{natbib}
\usepackage{booktabs}

\usepackage{caption}
\usepackage{subcaption}

\usepackage{enumerate}
\usepackage{hyperref}

\usepackage{url}

\usepackage{mathabx}
 
\newcommand{\R}{\mathbb{R}}

\DeclareMathOperator*{\im}{im}

\DeclareMathOperator{\diag}{diag}

\renewcommand{\k}{{\kappa}}

\newtheorem{proposition}{Proposition}
\newtheorem{theorem}{Theorem}

\theoremstyle{definition}
\newtheorem{definition}{Definition}

\newtheorem{remark}{Remark}

\begin{document}

\title[Absence of Hopf bifurcations in distributive phosphorylation]{
  On the existence of Hopf bifurcations in the sequential and distributive double phosphorylation cycle
}

\author{Carsten Conradi$^1$, Elisenda Feliu$^2$, Maya Mincheva$^3$ }
\date{\today}

\footnotetext[1]{Life Science Engineering, HTW Berlin, Wilhelminenhoftstr.\ 75, 10459 Berlin. carsten.conradi@htw-berlin.de}
\footnotetext[2]{Department of Mathematical Sciences, University of Copenhagen, Universitetsparken 5, 2100 Copenhagen, Denmark. efeliu@math.ku.dk}
\footnotetext[3]{Department of Mathematical Sciences, Northern Illinois University, 1425 W. Lincoln Hwy.,  DeKalb IL 60115, USA. mmincheva@niu.edu}

\maketitle

\begin{abstract}
  Protein phosphorylation cycles are important mechanisms of the post
  translational modification of a protein and as such an integral part
  of intracellular signaling and control. We consider the sequential
  phosphorylation and dephosphorylation of a protein at two binding
  sites. While it is known that proteins where phosphorylation is
  processive and dephosphorylation is distributive admit 
  oscillations (for some value of the rate constants and total
  concentrations) it is not known whether or not this is the case if
  both phosphorylation and dephosphorylation are distributive. We
  study simplified mass action models of sequential and
  distributive phosphorylation and show that for each of
  those there do not exist rate constants and total concentrations
  where a Hopf bifurcation occurs. To arrive at this result we use
  convex parameters to parametrize the steady state and
  Hurwitz matrices.
\end{abstract}
 
 \section{Introduction}
 
Protein phosphorylation cycles consist of three proteins, a substrate
$S$ and two enzymes $K$ and $F$. One enzyme, the kinase $K$, attaches 
phosphate groups to the substrate and hence phosphorylates the
substrate while the other, the phosphatase $F$, removes phosphate
groups and hence dephosphorylates the substrate. Protein
phosphorylation cycles are important mechanisms of post translational
modification of a protein and as such an integral part of
intracellular signaling and control \cite{conradi-shiu-review}. 
Often phosphorylation and dephosphorylation follow a sequential and
distributive mechanism as depicted in Fig.~\ref{fig:cycles-dd}: in
each encounter of $S$ and either $K$ or $F$ exactly one binding site is
(de)phosphorylated. If either phosphorylation or dephosphorylation follows
a processive mechanism, then at least two binding sites are
(de)phosphorylated in each encounter of $S$ and either $K$ or $F$
(cf. Fig.~\ref{fig:cycles-pd} \& \ref{fig:cycles-dp}).
Here we study the sequential and distributive phosphorylation of a
protein $S$ at two binding sites as depicted in
Fig.~\ref{fig:cycles-dd}.
Such a study of the behavior of an important biochemical module is
of particular interest in the light of studies elucidating the
complex behavior of signaling pathways composed of such modules
\cite{suwanmajo2018exploring}.

Mathematical models of both processive and distributive phosphorylation
have been extensively studied and it is known that they admit complex 
dynamics (see e.g.\ \cite{conradi-shiu-review,G-PNAS,SH09,TG-Nature} and the many references therein). 
The mass action model of the sequential and distributive
phosphorylation cycle depicted in Fig.~\ref{fig:cycles-dd} is
arguably one of the -- mathematically -- best studied and challenging systems of
post translational modification: both multistationarity (the existence
of at least two positive steady states) and bistability (the existence
of two locally stable positive steady states) have been established 
(cf.\ for example \cite{fein-024,conradi-mincheva} for
multistationarity, \cite{rendall-2site} for bistability). In fact it
has been shown that this mass action model admits at most three
positive steady states \cite{Wang:2008dc,FHC14}.

\begin{figure}[!h]
  \centering
    \begin{subfigure}{0.3\textwidth}
    \includegraphics[width=\textwidth]{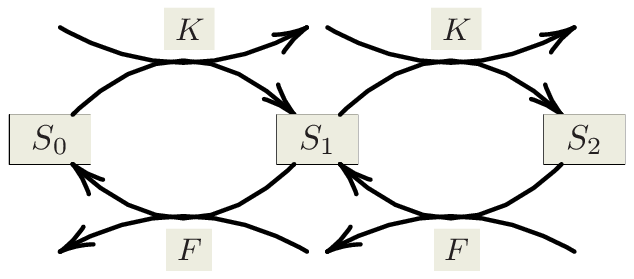}
    \subcaption{
      \label{fig:cycles-dd}
      Distributive phosphorylation and dephosphorylation
    }
  \end{subfigure}
  \hfill
  \begin{subfigure}{0.3\textwidth}
    \includegraphics[width=\textwidth]{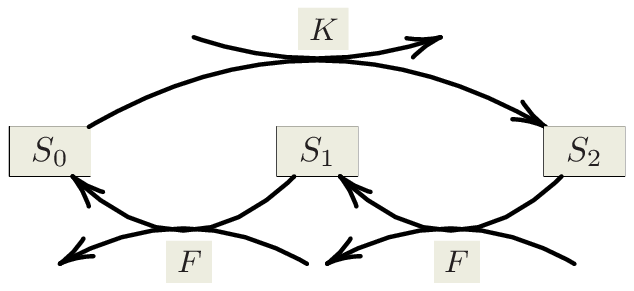}
    \subcaption{
      \label{fig:cycles-pd}
      Processive phosphorylation, distributive dephosphorylation
    }    
  \end{subfigure}
  \hfill
  \begin{subfigure}{0.3\textwidth}
    \includegraphics[width=0.9\textwidth]{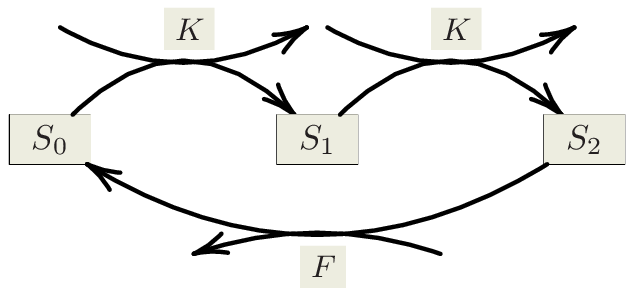}
    \subcaption{
      \label{fig:cycles-dp}
      Distributive phosphorylation, processive dephosphorylation
    }    
  \end{subfigure}

  \caption{
    Phosphorylation cycles describing the phosphorylation of $S$ at
    two binding sites. $S_0$ unphosphorylated $S$, $S_1$ and $S_2$
    mono- and bi-phosphorylated $S$. Only in the distributive
    (de)phosphorylation steps mono-phosphorylated $S_1$ is released. 
  }
  \label{fig:cycles}
\end{figure}

In contrast purely processive phosphorylation cycles have a unique
stable steady state \cite{processive-Nsite}, while the mixed cycles
depicted in Fig.~\ref{fig:cycles-pd} and \ref{fig:cycles-dp}
have a unique steady state that might not be stable, and admit oscillations
\cite{SK,shiu:hopf}. Distributive steps therefore seem to be involved
with the emergence of oscillations, in particular as in more involved
combinations of distributive and processive steps, oscillations have
been reported  as well
\cite{conradi-shiu-review,obatake2019oscillations}.

Interestingly, oscillations have not been reported for the cycle
depicted in Fig.~\ref{fig:cycles-dd}, despite considerable effort by
different research groups. One avenue to establish oscillations is to
determine values of rate constants and concentration variables where a
supercritical Hopf bifurcation occurs, see e.g.\
\cite{Kuznetsov:1995:EAB}.  In \cite{osc-010} Hopf bifurcation points
have been located in the parameter space of a variety of models from
the systems biology literature. 
In \cite[Section~5.36]{osc-010} a
simplification of the mass action model of Fig.~\ref{fig:cycles-dd} is
examined. The authors provide steady state  concentration values and rate
constants of a candidate Hopf bifurcation point. (The
rate constants are provided indirectly since they can be computed from
the  \lq convex parameters\rq{} the authors use, cf.\
Section~\ref{sec:convex-para}). 
  Applying Theorem~\ref{thm:hopf}, however, shows that at this point no Hopf 
  bifurcation occurs: the candidate point satisfies the conditions on the Hurwitz determinants but
  it fails the condition on the constant coefficient of the
  characteristic polynomial (in our case the last nonzero coefficient
  of the characteristic polynomial). Thus it cannot be a point of
  Hopf bifurcation, for more details see the discussion in
  Section~\ref{sec:discussion}. 
In summary, to the best of our knowledge, for the mass action model of
the phosphorylation cycle in Fig.~\ref{fig:cycles-dd} nor
simplifications of it, neither Hopf bifurcations nor oscillations have
been reported to date.

In this paper we work towards solving the problem of  existence or
non-existence  of Hopf bifurcations for the double phosphorylation
network. We follow the strategy outlined in \cite{shiu:hopf} and 
as in \cite{osc-010} we use convex parameters (see
Section~\ref{sec:convex-para}). As in \cite{osc-010} we consider
simplifications of the mass action model derived from
Figure~\ref{fig:cycles-dd}. Specifically we consider all four mass
action networks derived from Fig.~\ref{fig:cycles-dd} that contain
only two (out of four possible) enzyme-substrate complexes. 
This is done for two reasons: first, the four networks are
biochemically interesting and hence worth studying by
themselves (cf.\ Remark~\ref{rem:relation-MM}). Second, while it is
trivial to see that three of the 
models do not admit Hopf bifurcations, the fourth model displays a
nice structure that can be explored, even when all parameters treated
as unknown, by performing symbolic computations. Treatment of the full
model is currently out of reach due to the computational complexity.

The paper is organized as follows: in Section~\ref{sec:notation} we
introduce the notation. We further recall convex
parameters and a criterion for purely imaginary roots in 
Section~\ref{sec:background}. In Section~\ref{sec:maa-model} we
motivate the mass action models studied in this paper. In
Section~\ref{sec:Absence} we prove that for none of the four mass
action models Hopf bifurcations are possible: for each mass action
model we show that there do not exist parameter values such that a
Hopf bifurcation occurs (this is Theorem~\ref{thm:main}). The proof of the theorem relies on large symbolic computations that we present in the Maple Supplementary File  {\bf `SupplMat1.mw'} (see also `SupplMat1.pdf' for a pdf version). 
In Section~\ref{sec:discussion} we comment in more detail on the
candidate point for Hopf bifurcation presented in \cite{osc-010}.

\section{Notation}
\label{sec:notation}

We consider systems of $n$ chemical species $A_1$, \ldots, $A_n$ and
$r$ chemical reactions of the form 
\begin{displaymath}
  \sum_{i=1}^n \alpha_{ij} A_i \xrightarrow{\k_j } \sum_{i=1}^n
  \beta_{ij} A_i , \quad j=1,2,\ldots r,
\end{displaymath}
where the integer numbers $\alpha_{ij} \geq 0$, $\beta_{ij}\geq 0$
are the stoichiometric coefficients and $\k_j>0$ the rate constants. 
We use $x_i$   to denote the concentration of species $A_i$. 
Throughout this paper  we will assume that all reactions are governed
by mass action kinetics, that is, the reaction rate is proportional to
the product of the concentrations of the reacting species raised to
the power of their respective molecularities. Then the reaction rate
$v_j(\k,x)$ of the  $j$-th reaction is
\begin{equation}
  \label{eq:massaction}
  v_j(\k,x)  =\k_j \prod_{i=1}^n  x_i^{\alpha_{ij} }.
\end{equation}
With this, the above reaction network defines the following system of ordinary
differential equations
\begin{equation}
  \label{eq:mass-act}
  \dot x = N v(\k,x),
\end{equation}
where $N$ is the stoichiometric matrix. Here $N$ is of dimension $n \times r$ and $v(\k,x)$ of
dimension $r \times 1$. The $ij$-th entry of $N$  is given by the
difference of the stoichiometric coefficients:
\begin{displaymath}
  N_{ij} = \beta_{ij} - \alpha_{ij}.
\end{displaymath}

Let $\diag(\kappa)$ be the diagonal $r \times r$ matrix of rate constants
\begin{displaymath}
  \diag (\k) = \diag(\k_1,\k_2, \ldots ,\k_r)
\end{displaymath}
where the  $\k_i$ coordinate of an  $r$-dimensional vector $\k$ is the
$[i,i]$ entry of $\diag(\k)$. Let $Y$ be the $n \times r$ matrix
whose column vectors $y_j$ contain 
the stoichiometric coefficients $\alpha_{ij}$ of the reactant of 
the $j$-th reaction. The matrix $Y$ is sometimes called the
\emph{kinetic order matrix}. 
Given vectors $x,y\in \R^n$, we use the notation
$x^y=\prod_{i=1}^n x_i^{y_i}$. Then the columns of $Y$ define
the monomial vector
\begin{displaymath}
  \psi(x) =
  \begin{pmatrix}
    x^{y_1} \\ \vdots \\ x^{y_r}
  \end{pmatrix}
\end{displaymath}
and $v(\k,x)$ can be written as the product
\begin{displaymath}
  v(\k,x) = \diag(\k) \psi(x).
\end{displaymath}

For example the reaction network
\begin{displaymath}
  S_0 + K \ce{<=>[\k_1][\k_2]} S_0K \ce{->[\k_3]} S_1+K
\end{displaymath}
consists of the three reactions
\begin{displaymath}
  S_0 + K \ce{->[\k_1]} S_0 K, \quad S_0 K\ce{->[\k_2]} S_0 + K\quad  \text{ and }\quad 
  S_0 K\ce{->[\k_3]} S_1 + K 
\end{displaymath}
among the four species $S_0$, $K$, $S_1$ and  $S_0K$ taken in that
order as species $A_1$ through $A_4$. 
The stoichiometric coefficients of the first reaction are
$\alpha_{11}=\alpha_{21}=\beta_{41}=1$ and $\alpha_{31}=\alpha_{41}=
\beta_{11} = \beta_{21}= \beta_{31}=0$. The differences
$\beta_{i1}-\alpha_{i1}$ define the first column of the following 
stoichiometric matrix (the remaining columns are defined in a similar
way): 
\begin{displaymath}
  N=\left[
    \begin{array}{rrr}
      -1 &  1 &  0 \\
      -1 &  1 &  1 \\
       0 &  0 &  1 \\
       1 & -1 & -1 
    \end{array}
  \right].
\end{displaymath}
With mass action kinetics, the reaction rate vector is
\begin{displaymath}
  v(k,x) =
  \begin{pmatrix}
    \k_1 x_1 x_2 \\ \k_2 x_4 \\ \k_3 x_4
  \end{pmatrix} =  \left[
    \begin{array}{ccc}
      \k_1 & 0 & 0 \\
      0 & \k_2 & 0 \\
      0 & 0 & \k_3
    \end{array}
  \right]   \begin{pmatrix}
 x_1 x_2 \\   x_4 \\  x_4
  \end{pmatrix}.
\end{displaymath}
The kinetic order matrix is
\begin{displaymath}
  Y = \left[
    \begin{array}{ccc}
      1 & 0 & 0 \\
      1 & 0 & 0 \\
      0 & 0 & 0 \\
      0 & 1 & 1
    \end{array}
  \right].
\end{displaymath}

\section{Background}
\label{sec:background}

\subsection{A theorem to preclude Hopf bifurcations}
\label{sec:yangs-theorem}

In this subsection we state a criterion in terms of the principal minors of the Hurwitz matrix 
(Proposition~\ref{prop:yang}, Theorem~\ref{thm:hopf})  that we will use to exclude Hopf
bifurcations. The criterion follows from the results in
\cite{deciding:kahoui}. Related results for asserting Hopf
bifurcations can be found in \cite{BifTheo-009,osc-011}.   

For ease of notation consider
an ODE system parametrized by a single parameter  $\mu \in \mathbb{R}$: 
\begin{align*}
  \dot x ~=~ g_{\mu}(x)~,
\end{align*}
where $x \in \mathbb{R}^s$, and $g_{\mu}(x)$ varies smoothly in
$\mu$ and $x$. Let  $x^* \in \mathbb{R}^s$ be  a steady state of the
ODE system  for some fixed value $\mu_0$, that is,
$g_{\mu_0}(x^*)=0$. Furthermore,  we assume that we have a smooth
curve of steady states around $\mu_0$:
\begin{align}    \label{eq:curve}
  \mu ~\mapsto ~ x(\mu)~
\end{align}
that is, $g_{\mu}\left( x(\mu) \right)= 0$ for all $\mu$ close enough to $\mu_0$  such that
$x(\mu_0)=x^*$. By the Implicit Function Theorem, this curves exists
if the Jacobian of $g_{\mu_0}(x)$ evaluated at $x^*$ is non-singular.

Let $J(x(\mu),\mu)$ be the Jacobian of $g_\mu(x)$ evaluated at
$x(\mu)$. If, as $\mu$ varies, a single complex-conjugate pair of
eigenvalues of $J(x(\mu),\mu)$ crosses the imaginary axis while all
other eigenvalues remain with nonzero real parts, then  a {\em
  simple Hopf bifurcation} occurs at $(x(\mu),\mu)$.
In this case, a limit cycle arises. If the Hopf bifurcation  is
supercritical and all other eigenvalues have negative real part, then
stable periodic solutions are generated for nearby parameter values.

\begin{remark}
  \label{rem:necessary-condi}
  As we pointed out in the informal discussion above, a simple Hopf
  bifurcation requires that exactly one pair of complex conjugate
  eigenvalues crosses the imaginary axis when the parameter $\mu$
  varies. Thus, in particular, there must exist a value $\mu_0$ with
  steady state $x(\mu_0)$,
  where the Jacobian $J(x(\mu_0),\mu_0)$ has exactly one pair of
  purely imaginary eigenvalues. More generally, a Hopf
  bifurcation necessitates $J(x(\mu_0),\mu_0)$, for some value
  $\mu_0$, to have a pair of purely imaginary eigenvalues.
\end{remark}

The following proposition gives necessary and sufficient conditions
for the existence of exactly one pair of purely imaginary
eigenvalues in the scenario we encounter later on. It is
based on Hurwitz matrices, which we define first:

\begin{definition} \label{def:hurwitz}
  The {\em $i$-th Hurwitz matrix} of a univariate polynomial 
  $p(z)= a_0 z^s + a_{1} z^{s-1} + \cdots + a_s$  
  is the following $i \times i$ matrix:
  \[
    H_i ~=~ 
    \begin{pmatrix}
      a_1 & a_0 & 0 & 0 & 0 & \cdots & 0 \\
      a_3 & a_2 & a_1 & a_0 & 0 & \cdots & 0 \\
      \vdots & \vdots & \vdots &\vdots & \vdots &  & \vdots \\
      a_{2i-1} & a_{2i-2} & a_{2i-3} & a_{2i-4} &a_{2i-5} &\cdots & a_i
    \end{pmatrix}~,
  \]
  in which the $(k,l)$-th entry is $a_{2k-l}$ 
  as long as $0\leq 2 k - l \leq s$, and
  $0$ otherwise.  Note $H_s=a_s H_{s-1}$.
\end{definition}

Returning to the ODE system $\dot{x} = g_\mu(x)$, consider the
characteristic polynomial of the Jacobian matrix
$J(x(\mu),\mu)$
\begin{align*}
  p_{\mu}(z) 
  ~:=~ \det \left(z I - J(x(\mu),\mu) \right)
  ~=~  a_0(\mu) z^s + a_{1}(\mu) z^{s-1} + \cdots + a_s(\mu)~,
\end{align*}
where the determinant of a matrix $A$ is denoted by $\det A$.
For $i=1,\dots, s$, we let $H_i(\mu)$ be the  $i$-th Hurwitz
matrix of $p_{\mu}(z)$.

We now state the criterion we will use to determine whether the
characteristic polynomial has a pair of purely imaginary roots.

\begin{proposition} \label{prop:yang}
  In the setup above, let $s\geq 2$, $\mu_0$ be fixed, and let
  $H_i(\mu_0)$ denote the $i$-th Hurwitz matrix of $p_{\mu_0}(z)$. We
  assume that 
  \begin{equation}\label{eq:Hpos}
    \det H_1(\mu_0)>0,  \  \dots,  \ \det H_{s-2}(\mu_0)>0.
  \end{equation}
  Then $p_{\mu_0}(z)$ has at most one pair of symmetric roots, and has
  exactly one if and only if  $\det H_{s-1}(\mu_0) = 0$. In this
  case, the pair consists of purely imaginary roots if and only if
  $a_s(\mu_0)>0$.
\end{proposition}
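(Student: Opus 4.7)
The plan is to derive all three assertions from an explicit algebraic link between the Hurwitz determinants of $p_{\mu_0}(z)$ and the locations of its roots $\lambda_1,\dots,\lambda_s$, combined with the subresultant-based stability criterion of \cite{deciding:kahoui}. Since $p_{\mu_0}$ is a characteristic polynomial I may assume $a_0(\mu_0)=1$. The starting point will be the classical identity of Orlando,
\[
\det H_{s-1}(\mu_0) \;=\; a_0(\mu_0)^{\,s-1}\prod_{1\le i<j\le s}(\lambda_i+\lambda_j),
\]
(valid up to an explicit sign), which immediately shows that $\det H_{s-1}(\mu_0)=0$ is equivalent to the existence of a \emph{symmetric pair} of roots $\{+\alpha,-\alpha\}$.

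Assuming $\det H_{s-1}(\mu_0)=0$ and fixing such a pair with $\beta:=\alpha^2$, I would factor $p_{\mu_0}(z)=(z^2-\beta)\,q(z)$ with $\deg q=s-2$ and argue that the positivity in \eqref{eq:Hpos} forces $q$ to be Hurwitz stable. This is the substance of \cite{deciding:kahoui}: up to explicit positive factors, the principal minors $\det H_1,\dots,\det H_{s-2}$ of $p_{\mu_0}$ coincide with the principal subresultant coefficients of the pair $\bigl(p_{\mu_0}(z),\,p_{\mu_0}(-z)\bigr)$, and their strict positivity certifies that every root of $q$ lies in the open left half plane. In particular $q$ contributes no further symmetric pair, giving uniqueness of $\pm\alpha$ and hence the ``at most one pair'' and ``exactly one iff $\det H_{s-1}(\mu_0)=0$'' parts of the statement.

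With $q$ Hurwitz stable and leading coefficient $a_0(\mu_0)=1>0$, Stodola's necessary condition then gives that every coefficient of $q$ is strictly positive, in particular $q(0)>0$. Evaluating the factorization at $z=0$ yields $a_s(\mu_0)=p_{\mu_0}(0)=-\beta\,q(0)$, so $a_s(\mu_0)>0 \iff \beta<0 \iff \pm\alpha=\pm i\sqrt{-\beta}$ is purely imaginary, closing the last equivalence. The main obstacle will be the stability step for $q$: passing from the inequalities \eqref{eq:Hpos} to Hurwitz stability of the quotient is not an elementary consequence of Orlando's formula and genuinely requires the subresultant/Hurwitz dictionary of \cite{deciding:kahoui}. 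A self-contained alternative would be an induction on $s$ that tracks how the full Hurwitz sequence of $p_{\mu_0}$ transforms under the factorization $(z^2-\beta)q$, but that bookkeeping looks delicate, so I would prefer to invoke the cited result directly.
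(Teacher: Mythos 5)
Your proposal is correct and follows essentially the same route as the paper: both arguments lean on the subresultant/Hurwitz results of \cite{deciding:kahoui} to get that the positivity of $\det H_1,\dots,\det H_{s-2}$ leaves room for at most one symmetric pair while forcing the remaining $s-2$ roots into the open left half plane, and then read off real versus purely imaginary from the sign of $a_s$. The only differences are cosmetic: you make Orlando's formula explicit where the paper simply cites Corollary~3.2 of \cite{deciding:kahoui} for the first statement, and you obtain the sign of $a_s$ by evaluating the factorization $(z^2-\beta)q(z)$ at $z=0$ together with Stodola's condition, whereas the paper computes $a_s$ as $(-1)^s$ times the product of all roots (the two computations are the same, since $q(0)$ is, up to sign, the product of the $s-2$ stable roots). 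One small point to keep in mind if you write this out in full: the factorization with a \emph{real} $\beta$ is only available once you know the pair is real or purely imaginary, which is itself a consequence of the at-most-one-pair statement (a genuinely complex symmetric pair would bring its conjugate pair along); this is handled automatically if you take $z^2-\beta$ to be the (real) gcd of $p_{\mu_0}(z)$ and $p_{\mu_0}(-z)$ as in \cite{deciding:kahoui}, so the order of the argument should be: at most one pair first, factorization second.
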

\begin{proof}
  The first statement follows from Corollary 3.2  in
  \cite{deciding:kahoui} using \eqref{eq:Hpos}. So, assume
  $p_{\mu_0}(z) $ has a pair of symmetric roots $z,-z$ such that
  $\det H_{s-1}(\mu_0)=0$. By Lemma 3.3 in \cite{deciding:kahoui},
  the Routh-Hurwitz criterion for stable polynomials, and using
  \eqref{eq:Hpos}, we conclude that all the other $s-2$ roots of
  $p_{\mu_0}(z) $  have negative real part.  Now $a_s(\mu_0)$ is
  $(-1)^s$ times the product of the roots of $p_{\mu_0}(z) $, and
  hence its sign agrees with the sign of $(-1)^s(-1)^{s-2}z(-z)=
  -z^2$. We conclude that the pair of symmetric roots of
  $p_{\mu_0}(z) $ is real if   $a_s(\mu_0)\leq 0$ and purely
  imaginary if $a_s(\mu_0)> 0$.
\end{proof}

As a consequence of Propostion~\ref{prop:yang}, we obtain the
following criterion to preclude Hopf bifurcations:
\begin{theorem}\label{thm:hopf}
  Consider the dynamical system $\dot x = g(x)$ with $s\geq 2$ and
  assume there exists a curve of steady states $x(\mu)$ as in the
  above setting.
  As before, let $p_{\mu}(z)$ be the characteristic polynomial of
  the Jacobian $J(x(\mu),\mu)$ of $g_{\mu}(x)$ evaluated at
  $x(\mu)$. Further let $H_i(\mu)$ denote the $i$-th Hurwitz matrix
  of $p_{\mu}(z)$. \\
  If 
  \[
    \det H_1(\mu)>0,  \  \dots,  \ \det
    H_{s-2}(\mu)>0
  \]
  for all $\mu$ and
  either
  \begin{align*}
    a_s(\mu) \leq 0  \text{ whenever } \det H_{s-1}(\mu) = 0,
    \intertext{or}
    \det H_{s-1}(\mu) \neq 0 \text{ for all $\mu$,}
  \end{align*}
  then the system does not undergo a (simple) Hopf bifurcation.
\end{theorem}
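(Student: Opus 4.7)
The plan is to derive this theorem by direct contraposition, using Remark~\ref{rem:necessary-condi} to translate the bifurcation hypothesis into a purely algebraic statement about roots of the characteristic polynomial, and then applying Proposition~\ref{prop:yang} to rule those roots out. I would assume, for contradiction, that a (simple) Hopf bifurcation occurs at some parameter value $\mu_0$, and show that under either of the two disjunctive hypotheses of the theorem this leads to a contradiction.

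First, I would invoke Remark~\ref{rem:necessary-condi}: a Hopf bifurcation at $\mu_0$ forces the Jacobian $J(x(\mu_0),\mu_0)$ to have a pair of purely imaginary eigenvalues, which is to say that $p_{\mu_0}(z)$ has such a pair of roots. Next, I would apply Proposition~\ref{prop:yang} at this specific $\mu_0$. The global positivity assumption $\det H_1(\mu)>0,\dots,\det H_{s-2}(\mu)>0$ specializes to the hypothesis~\eqref{eq:Hpos} of the proposition at $\mu=\mu_0$, so the proposition applies and yields the equivalence: $p_{\mu_0}(z)$ admits a pair of purely imaginary roots if and only if both $\det H_{s-1}(\mu_0)=0$ and $a_s(\mu_0)>0$.

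It then remains to verify that each of the two alternative disjuncts stated in the theorem contradicts this conjunction. If $\det H_{s-1}(\mu)\neq 0$ for all $\mu$, then $\det H_{s-1}(\mu_0)=0$ cannot hold, contradicting the first conjunct. If instead $a_s(\mu)\leq 0$ whenever $\det H_{s-1}(\mu)=0$, then $\det H_{s-1}(\mu_0)=0$ forces $a_s(\mu_0)\leq 0$, contradicting $a_s(\mu_0)>0$. In either scenario, no $\mu_0$ can give rise to purely imaginary eigenvalues of the Jacobian, and hence no Hopf bifurcation is possible.

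The argument is essentially bookkeeping: it matches the two disjunctive hypotheses of the theorem against the two conditions arising in Proposition~\ref{prop:yang}, and uses Remark~\ref{rem:necessary-condi} to make the bridge to the bifurcation statement. I do not anticipate any serious obstacle; no additional computation or auxiliary lemma should be required beyond what has already been stated.
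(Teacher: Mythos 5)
Your argument is correct and is exactly the paper's proof, which is stated there in one line as ``Follows from Remark~\ref{rem:necessary-condi} and Proposition~\ref{prop:yang}''; you have simply spelled out the contrapositive bookkeeping that the authors leave implicit. No further comment is needed.
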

\begin{proof}
  Follows from Remark~\ref{rem:necessary-condi} and
  Proposition~\ref{prop:yang}.
\end{proof}

In the following sections we will use Theorem~\ref{thm:hopf} to prove
the non-existence of Hopf bifurcations in subnetworks of the mass
action network derived from Fig.~\ref{fig:cycles-dd}. For a reaction
network with $n$ species, if the rank $s$ of the stoichiometric matrix
$N$ is not maximal, as it is the case for our networks, then the
dynamics takes place in the invariant $s$-dimensional  linear
subspaces $x_0+\im N$. This implies that $0$ is a root of the
characteristic polynomial of $N v(\k,x)$ with multiplicity $n-s$ and
hence it factors as $$ p_{\k,x}(z)= z^{n-s} \big( a_{0}(\k,x) z^s  +
a_{1}(\k,x) z^{s-1} + \cdots + a_s(\k,x) \big).$$
The polynomial $a_{0}(\k,x) z^s  + a_{1}(\k,x) z^{s-1} + \cdots +
a_s(\k,x)$ is the characteristic polynomial of the Jacobian of the
restriction of system \eqref{eq:mass-act} to $x+\im N$, and hence we
apply Theorem~\ref{thm:hopf} to this polynomial.

\subsection{Convex parameters}
\label{sec:convex-para}
By the previous subsection, in order to determine whether a Hopf
bifurcation can arise in our systems, we need to analyze the Jacobian
matrix of the right-hand side of \eqref{eq:mass-act} for all possible
values of rate constants $\k$ and positive steady states $x^*$. Here
we reparametrize the Jacobian matrices using so-called {\it convex
  parameters}. These parameters were introduced by Clarke in
\cite{bc1980} to analyze the stability of  mass action reaction systems~\eqref{eq:mass-act}, in the context of {\it Stoichiometric Network Analysis}  (SNA) theory.

Let a  positive steady state $x^*\in \R^n_{>0}$ of (\ref{eq:mass-act}) satisfy  the polynomial system $N\diag (\k) \psi (x^*)=0$. Then  the rate
functions   $v=v(\k,x^*)$ satisfy the linear 
problem 
\begin{equation}\label{eq:lin-problem}
  N v=0, \quad v \geq 0 .
\end{equation}
The vector $v$ is referred to as a {\it flux vector} in the SNA theory \cite{bc1988}. 
The solutions $v$ of (\ref{eq:lin-problem}) define a convex polyhedral  cone called  the {\it flux cone}. Convex polyhedral cones have a finite number of extreme vectors up to a  scalar positive multiplication \cite{rr2015}. Therefore,
any flux vector $v$  can be represented as a nonnegative  linear combination of its extreme vectors $\{E_1, \ldots  ,E_l  \}$
\begin{equation}\label{eq:convex-cone}
  v = \sum_{i=1}^l \lambda_i E_i = E \lambda, \quad \mbox{all}   \quad
  \lambda_i \geq 0,
\end{equation}
where $E$ is the matrix with columns $E_1,\dots,E_l$ and
$\lambda=(\lambda_1,\dots,\lambda_l)$.

\begin{remark}[cf.~\cite{rr2015}]
  \begin{itemize} 
  \item[(a)] The vectors $E_1, E_2, \ldots, E_l$ need not be linearly independent.
  \item[(b)] If all extreme vectors $ E_1, \ldots , E_l$ are unit vectors,
    then their choice is unique.  
  \item[(c)] 
      When $v=v(\k,x^*)$ and $x^*\in \R^n_{>0}$, then all components
      of $v$ are positive. This might impose some restrictions on the
      possible values of $\lambda$ in \eqref{eq:convex-cone}.
  \end{itemize}
\end{remark}

The  nonnegative coefficients
$ \lambda_1, \ldots , \lambda_l$ 
in (\ref{eq:convex-cone}) are often referred to as  convex parameters
in the literature.  However, they  do not account alone for all new
parameters -  the  other group of  parameters used in  SNA theory
are reciprocals of each positive steady state coordinate
$x_k^*>0$. They are  denoted by  
\begin{equation}\label{eq:h}
h_k=\frac{1}{x_k^*} , \quad k=1,\ldots , n.
\end{equation}

\begin{definition}
  A vector of  convex parameters  is a vector of the form
  $(h,\lambda)=(h_1,\dots,h_n,\lambda_1,  \ldots , \lambda_l)\in
  \R^n_{>0} \times \R^l_{\geq 0}$ such that $E\lambda\in \R^r_{>0}$. 
\end{definition}
 
The convex parameters are convenient for parameterizing the Jacobian
$J(\k,x)$ evaluated at a positive steady state $x=x^*$. To see this, 
note that the $(j,i)$-th entry of the Jacobian  of $v(\k,x)$ evaluated at $x^*$ is
\begin{displaymath}
  {\frac{\partial\, v_j (\k,x)}{\partial \, x_i}}_{\lvert x=x^*}
  =\frac{\alpha_{ij}v_j(\k,x^*)}{x_i^*}=\alpha_{ij} v_j(\k,x^*) \tfrac{1}{x^*_i} .
\end{displaymath}
 Hence,  the Jacobian   of $N v(\k,x)$ evaluated at $x^*$ is
 \[ 
   J(\k,x)_{|x=x^*}= N \diag( v(\k,x^*)) \, Y^T \diag(\tfrac{1}{x^*}),
 \]
 where we use the vector notation
 \[
   \tfrac{1}{x} = \big(\, \tfrac{1}{x_1}, \ldots ,\tfrac{1}{x_n}\, \big)^T.
 \]

 Using now  \eqref{eq:convex-cone} and \eqref{eq:h} to write $v(\k,x^*)
 = E\lambda$, the Jacobian of $Nv(\k,x)$ evaluated at  $x^*$ can be
 written as  
 \begin{equation}
   \label{eq:new-jac}
   J(\k,x)_{|x=x^*}= J(h,\lambda) = N \diag(E\lambda) Y^T
   \diag(h),
\end{equation}
with $(h,\lambda)$ a vector of convex parameters. 
Therefore, given a vector of rate constants $\k$ and a corresponding
positive steady state $x^*$, there exist convex parameters
$(h,\lambda)$ such that  equality \eqref{eq:new-jac} holds.

Conversely, given convex parameters $(h,\lambda)$, we define $x^*=1/h$
and let \[ \k = \diag(\psi(h))E\lambda,\]
which is a positive vector since all entries of $E\lambda$  are
positive. Then, using that $\psi_j(x)^{-1} =  \psi_j(x^{-1})$, we
obtain $v(\k,x^*)=\diag(\k) \psi(x^*)=E\lambda$, and equality
\eqref{eq:new-jac} holds as well. This proves the following 
proposition:

\begin{proposition}\label{prop:correspondence}
The set of Jacobian matrices  $J(\k,x^*)$ for all $\k$ and
corresponding positive steady states $x^*$ agrees with the set
of matrices defined by the right-hand side of \eqref{eq:new-jac}, for
all $h\in \R^n_{>0}$ and $\lambda\in \R^l_{\geq 0}$ such that
$E\lambda\in \R^r_{>0}$.
\end{proposition}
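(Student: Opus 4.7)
The plan is to establish the set equality by showing both inclusions: every Jacobian $J(\kappa,x^*)$ arising from a rate-constant/steady-state pair can be written in the convex-parameter form \eqref{eq:new-jac}, and conversely, every matrix of the form \eqref{eq:new-jac} arises from some such pair. Both directions reduce to explicit constructions that exploit the chain-rule formula for the Jacobian of $v(\kappa,x)$ at a positive steady state, namely that
\[
  J(\kappa,x)_{|x=x^*} \;=\; N\,\diag\bigl(v(\kappa,x^*)\bigr)\,Y^T\,\diag\bigl(\tfrac{1}{x^*}\bigr),
\]
so the only real content is to describe the set of admissible pairs $\bigl(v(\kappa,x^*),\,\tfrac{1}{x^*}\bigr)$.

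For the forward inclusion, suppose $\kappa\in\R^r_{>0}$ and $x^*\in\R^n_{>0}$ satisfy $Nv(\kappa,x^*)=0$. Then $v(\kappa,x^*)$ is a strictly positive vector in the kernel of $N$, hence lies in the interior of the flux cone \eqref{eq:lin-problem}. By the extreme-ray decomposition \eqref{eq:convex-cone}, there exists $\lambda\in\R^l_{\geq 0}$ with $v(\kappa,x^*)=E\lambda$, and positivity of $v$ forces $E\lambda\in\R^r_{>0}$. Setting $h=1/x^*$ via \eqref{eq:h} gives a vector of convex parameters that, by direct substitution into the Jacobian formula above, realizes \eqref{eq:new-jac}.

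For the reverse inclusion, take any admissible $(h,\lambda)$, so $h\in\R^n_{>0}$, $\lambda\in\R^l_{\geq 0}$, and $E\lambda\in\R^r_{>0}$. Define $x^*:=1/h\in\R^n_{>0}$ and $\kappa:=\diag(\psi(h))\,E\lambda$. Since $\psi(h)$ has strictly positive entries and $E\lambda$ does too, $\kappa\in\R^r_{>0}$. The heart of the argument is the elementary identity $\psi_j(x)\,\psi_j(1/x)=1$ for each $j$ (the monomials of $\psi$ are products of powers of the $x_i$), which gives $\diag(\psi(x^*))\,\psi(h)=I_r$. Consequently
\[
  v(\kappa,x^*) \;=\; \diag(\kappa)\,\psi(x^*) \;=\; \diag(\psi(x^*))\,\diag(\psi(h))\,E\lambda \;=\; E\lambda,
\]
so $Nv(\kappa,x^*)=NE\lambda=0$ (as the columns of $E$ lie in $\ker N$), confirming that $x^*$ is a positive steady state for $\kappa$. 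Substituting $v(\kappa,x^*)=E\lambda$ and $1/x^*=h$ into the Jacobian formula yields \eqref{eq:new-jac}.

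No step is truly an obstacle: the only subtlety is bookkeeping with the monomial identity $\psi(x)^{-1}=\psi(1/x)$ (interpreted componentwise), which is what makes the constructed $\kappa$ reproduce exactly the prescribed flux $E\lambda$. Both constructions are inverse to each other up to the non-uniqueness in choosing $\lambda$ when the extreme vectors $E_1,\dots,E_l$ are linearly dependent, but this non-uniqueness is irrelevant for the stated set equality.
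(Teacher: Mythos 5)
Your proposal is correct and follows essentially the same route as the paper's own argument (which appears in the text immediately preceding the proposition): decompose the positive steady-state flux as $E\lambda$ and set $h=1/x^*$ for the forward inclusion, and for the converse define $x^*=1/h$, $\k=\diag(\psi(h))E\lambda$ and use the identity $\psi_j(x)^{-1}=\psi_j(x^{-1})$ to recover $v(\k,x^*)=E\lambda$. The only addition is your explicit check that $NE\lambda=0$, which the paper leaves implicit.
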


The computation of the Jacobian in convex parameters
\eqref{eq:new-jac} appears in  great detail  in previous works
\cite{ac87,ges05}.
In \cite{conradi-switch} it is used to detect saddle-node
bifurcations.
In Section~\ref{sec:Absence}, we use the Jacobian in convex
coordinates given in \eqref{eq:new-jac} and apply Theorem~\ref{thm:hopf} to
conclude that there does not exist a point ($h$,$\lambda$) where a
Hopf bifurcation occurs. Using the equality between the two
sets of matrices in  Proposition~\ref{prop:correspondence}, this
will imply that there do not exist $\k$  and $x^*$ where a Hopf
bifurcation occurs.

\begin{remark}
  The coefficients of the characteristic polynomial of $J(\k,x^*)$ are
  homogeneous  polynomials in the convex parameters if the Jacobian
  matrix is parametrized as in \eqref{eq:new-jac}. 
\end{remark}

\section{The mass action model derived from Fig.~\ref{fig:cycles-dd} 
  and its simplifications}
\label{sec:maa-model}

Figure~\ref{fig:cycles-dd} contains four phosphorylation events: the
phosphorylation of $S_0$ and $S_1$ and the dephosphorylation of $S_2$
and $S_1$. At the level of mass action kinetics each of these
phosphorylation events can be  described by the following reactions
(with $i=0,1$):
\begin{displaymath}
  S_i + K \ce{<=>} KS_i \ce{-> } S_{i+1}+K \quad \text{ and }\quad  S_{i+1} + F
  \ce{<=> } FS_{i+1} \ce{-> } S_i+F.
\end{displaymath}
Consequently, if all phosphorylation events depicted in
Fig.~\ref{fig:cycles-dd} are described at the mass action level one
obtains the following reaction network: 
\begin{equation}
  \tag{$\mathcal{N}$}
  \label{eq:network}
  \begin{split}
    S_0 + K \ce{<=>[\k_1][\k_2]} KS_0 \ce{->[\k_3]} S_1+K
    \ce{<=>[\k_4][\k_5]} KS_1 \ce{->[\k_6] } S_2+K \\
    S_2 + F  \ce{<=>[\k_7][\k_8]} FS_2 \ce{->[\k_9]} S_1+F
    \ce{<=>[\k_{10}][\k_{11}]} FS_1 \ce{->[\k_{12}]} S_0+F.
  \end{split}
\end{equation}
To apply Theorem~\ref{thm:hopf}, one has to compute Hurwitz determinants (see
Section~\ref{sec:yangs-theorem} above). These are determinants of
matrices that are composed of the coefficients of the characteristic
polynomial of the Jacobian of the ODEs defined by~\ref{eq:network}. 
In order to show whether or not there exist
some values of the rate constants where a Hopf bifurcation occurs, we
have to treat all rate constants as fixed but unknown. 
The coefficients of the characteristic polynomial may contain several 
hundred terms (cf.\ the supporting information of
\cite{conradi-mincheva}). 
To facilitate the analysis we consider the following simplifications
of~\ref{eq:network}:
\begin{enumerate}[{(}i{)}]
\item We consider only the forward reaction of the reversible
  reactions
  \begin{displaymath}
    S_i + K \ce{<=>} KS_i \quad \text{ and } \quad S_{i+1} + F \ce{<=> } FS_{i+1}.
  \end{displaymath}
This is a reasonable assumption if the rate constants for  the
reversible reactions are small.
\item We consider only two of the four enzyme-substrate complexes
  $KS_0$, $KS_1$, $FS_2$ and $FS_1$.  
\end{enumerate}
There are six ways to choose two complexes out of four. Due to the
symmetry of the ODE system obtained by interchanging $K$ and $F$,
$S_0$ and $S_2$, and relabeling the rate constants as appropriate,  it
suffices to consider the following four simplified networks derived
from~\ref{eq:network}: 
\begin{itemize}
\item The network containing only $KS_0$ and $FS_2$:
  \begin{equation}
    \tag{$\mathcal{N}_1$}
    \label{eq:N1}
    \begin{split}
      K + S_0 \ce{->[\k_1]} KS_0 \ce{->[\k_2]} K+S_1  \ce{->[\k_3]}
      K+S_2 \\ 
      F + S_2 \ce{->[\k_4]} FS_2 \ce{->[\k_5]} F+S_1  \ce{->[\k_6]}
      F+S_0.
    \end{split}
  \end{equation}
\item The network containing only $KS_0$ and $FS_1$
  (mathematically equivalent to the network containing only
  $FS_2$ and $KS_1$):
  \begin{equation}
    \tag{$\mathcal{N}_2$}
    \label{eq:N2}
    \begin{split}
      K + S_0 \ce{->[\k_1]} KS_0 \ce{->[\k_2]}   K+S_1  \ce{->[\k_3]}
      K+S_2 \\
      F + S_2 \ce{->[\k_4]} F+S_1 \ce{->[\k_5]} FS_1  \ce{->[\k_6]}
      F+S_0.
    \end{split}
  \end{equation}
\item The network containing only $KS_0$ and $KS_1$
  (mathematically equivalent to the network containing only
  $FS_2$ and $FS_1$):
  \begin{equation}
    \tag{$\mathcal{N}_3$}
    \label{eq:N3}
    \begin{split}
      K + S_0 \ce{->[\k_1]} KS_0 \ce{->[\k_2]}  K+S_1  \ce{->[\k_3]}
      KS_1   \ce{->[\k_4]} K+S_2 \\
      F + S_2  \ce{->[\k_5]} F+S_1  \ce{->[\k_6]}  F+S_0.
    \end{split}
  \end{equation}
\item The network containing only $KS_1$ and $FS_1$:
  \begin{equation}
    \tag{$\mathcal{N}_4$}
    \label{eq:N4}
    \begin{split}
      K + S_0 \ce{->[\k_1]}K+S_1  \ce{->[\k_2]} KS_1  \ce{->[\k_3]}
      K+S_2 \\
      F + S_2  \ce{->[\k_4]} F+S_1 \ce{->[\k_5]} FS_1 \ce{->[\k_6]}
      F+S_0.
    \end{split}
  \end{equation}
\end{itemize}

\begin{remark}
  The dynamics of the network obtained upon removal of an intermediate
  resembles that of the original network when the binding reaction
  occurs at a slower time scale than unbinding reactions. Specifically,
  under this assumption, the simplified system corresponds to the slow
  system arising from Tikhonov-Fenichel singular perturbation reduction
  of the original system, which additionally agrees with its
  quasi-steady-state reduction \cite{walcher:feliu:wiuf}.
  Upon removal of a reverse unbinding reaction, the assumption is that
  this reaction occurs at a much slower time scale than the other
  reactions in the network, and hence is negligible.
\end{remark}

\begin{remark}
  \label{rem:relation-MM}
  A biochemical interpretation of the simplification leading to the
  networks \eqref{eq:N1} -- \eqref{eq:N4} can be obtained by
  comparing it to the well-known Michaelis-Menten approximation:
  we view our
  simplification as 
  similar in spirit but with different focus and hence concurrent. To
  see this recall that to simplify a reaction network based on the
  Michaelis-Menten approximation, a mass action network of the form 
  \begin{equation}   \label{eq:NE}
      S_i + K \ce{<=>} KS_i \ce{-> } S_{i+1}+K
  \end{equation}
  is replaced by a network of the form
  \begin{displaymath}
    S_i \ce{->[v_{\text{MM}}]} S_{i+1},
  \end{displaymath}
  where $v_{\text{MM}}$ is the familiar Michaelis-Menten kinetics
  \begin{displaymath}
    v_{\text{MM}}  = \frac{k_{cat}K_0 [S_i]}{\mathsf{K}_m + [S_i]},
  \end{displaymath}
  with catalytic constant $k_{cat}$, total enzyme concentration $K_0$,
  Michaelis-Menten constant $\mathsf{K}_m$, and where $[\cdot ]$
  denotes the concentration.
  This is a standard practice of model
  simplification, for example if the condition formulated by Briggs
  and Haldane holds \cite{enz-kinetics}; see
  \cite{Walcher:MM1,Walcher:MM2} for the mathematical study of this 
  reduction. 
  In \cite{osc-012} it is shown that the dynamical system that
  arises from the Michaelis-Menten approximation of the full
  network~\ref{eq:network} does not exhibit periodic orbits.

  A particular feature of the Michaelis-Menten kinetic 
  $v_{\text{MM}}$ is that it is approximately linear in 
  $[S_i]$ for small values of $[S_i]$ and approximately constant for very
  large values. Thus the simplification based on the Michaelis-Menten
  approximation covers, for example, the saturation of
  available enzyme with substrate very well. However, it does not
  capture the influence of varying concentrations of the free enzyme
  $K$. 
  
  In our simplification we replace a mass action network of the form 
  \eqref{eq:NE} by a mass action network of the form
  \begin{displaymath}
        S_i + K \ce{->[\k]} S_{i+1} + K,
  \end{displaymath}
  where, according to the law of mass action the reaction rate is
  \begin{displaymath}
    v_{\text{MA}} = \k \cdot [S_i] \cdot [K].
  \end{displaymath}
  Clearly, $v_{\text{MA}}$ is linear in $[S_i]$, hence for small values of
  $S_i$ and values of $[K]$ close to its total concentration $K_0$ the
  two reaction rates $v_{\text{MA}}$ and $v_{\text{MM}}$ behave
  similar. However, $v_{\text{MA}}$ is not able to reproduce the
  saturation of enzyme by the 
  substrate. But it does capture the influence of varying
  concentrations of free enzyme $[K]$. Hence for some values of $[K]$ and
  $[S_i]$ our simplification behaves similar to the one based on the
  Michaelis-Menten approximation, but not everywhere. Moreover, our
  simplification covers the influence of varying concentrations of
  $K$.

  In summary, 
  the simplification based on the Michaelis-Menten approximation on
  the one hand is well suited to 
  describe the saturation of the enzyme with substrate but does not
  account for the influence of varying concentrations of free
  enzyme. Our simplification on the other hand cannot account for
  enzyme saturation, but for the influence of varying concentrations
  of free enzyme. Moreover, for small values of the substrate
  concentration and for concentrations of free enzyme close to its
  total amount both simplifications behave similar. Hence we view our
  simplification as biochemically concurrent to the one based on the
  Michaelis-Menten approximation. Both simplifications fail to
  accommodate the complete behavior of the distributive and sequential double
  phosphorylation cycle of Fig.~\ref{fig:cycles-dd}. But both cover
  different but equally important aspects of its behavior and hence
  are well worth studying.
\end{remark}

\begin{remark}\label{rk:simplify}
  If any of the networks \ref{eq:N1} -- \ref{eq:N4} had a Hopf
  bifurcation giving rise to   periodic solutions,
  then by \cite{Banaji18:oscillations}, so would the full mechanism in
  \ref{eq:network}.
  In particular, by \cite{Banaji18:oscillations} the same is true if any of the
  irreversible reactions is made reversible and in particular, if 
  unbinding reactions are considered in the formation the
  complexes $K S_0$, $K S_1$, $F S_1$ or $F S_2$.
\end{remark}

\section{Absence of Hopf bifurcations}
\label{sec:Absence}

In this section we apply Theorem~\ref{thm:hopf}, using the discussion after it, to the networks
\ref{eq:N1} -- \ref{eq:N4}. To this end we use
equation~(\ref{eq:new-jac}) to determine the Jacobian matrices
$J_1(h,\lambda)$, \ldots, $J_4(h,\lambda)$ of networks \ref{eq:N1} --
\ref{eq:N4}, correspondingly. The computations are done symbolically and can
  be found in the supporting file {\bf `SupplMat1.mw'}.

We first comment on some common features of the networks: 
\begin{remark}\label{re:rem7}
  \begin{itemize}
  \item[(i)] Every network \ref{eq:N1} --
\ref{eq:N4} consists of $7$ species and $6$ reactions. 
  \item[(ii)] The stoichiometric matrix of every network has rank $s=4$, as every network has 
    three conserved quantities (the total amount of substrate, kinase
    and phosphatase).
  \item[(iii)] For every network the cone (\ref{eq:lin-problem}) is
    spanned by two nonnegative vectors $w_0$ and $w_1$ such that
    $\lambda_1 w_0 + \lambda_2 w_1$ is a positive vector if and only
    if $\lambda_1,\lambda_2>0$.
  \end{itemize}
\end{remark}
 
Observe that a scaling $\alpha \lambda$ of the vector
$\lambda=(\lambda_1,\lambda_2)$ with $\alpha>0$ translates into a
scaling $\alpha \k$ of the vector of rate constants $\k$ under the
correspondence of parameterization in
Proposition~\ref{prop:correspondence}.  Further $x^*$ is a steady
state for $\k$ if and only if it is for $\alpha \k$ and 
$\alpha J(\k,x^*)=J(\alpha \k ,x^*)$. The latter implies that any
eigenvalue of $J(\alpha \k, x^*)$ is  in fact an eigenvalue of
$J(\k, x^*)$ multiplied by $\alpha>0$. Thus,  $J(\alpha \k ,x^*)$
has a pair of purely imaginary eigenvalues if and only if  $J( \k
,x^*)$  has  a pair of purely imaginary eigenvalues.

Hence, it is enough to take one of $\lambda_1,\lambda_2$ to be one
(since both are positive), and we let the elements in the kernel of
the stoichiometric matrices $N_i$ be of the form 
\begin{displaymath}
  w_0 +\lambda  w_1,\qquad \lambda>0.
\end{displaymath}
Therefore, we consider every Jacobian matrix $J_i(h,\lambda)$ according to equation
\eqref{eq:new-jac} parametrized by $8$ parameters: the
parameter $\lambda$ and $h_1,\dots,h_7$.

By Remark~\ref{re:rem7} (i) and (ii) it follows that  the
characteristic polynomial of every 
Jacobian $J_i(h,\lambda)$ is a degree $7$ polynomial of the form 
\begin{displaymath}
  z^3\, \big(a_0(h,\lambda)z^4 + \ldots + a_3(h,\lambda ) z +  a_4(h,\lambda) \big),
\end{displaymath}
where each $a_i$ depends on the $8$ parameters $\lambda, h_1,\ldots, h_7$ (cf.\ the file {\bf `SupplMat1.mw'}). Following the discussion after Theorem~\ref{thm:hopf}, for each network we   compute  $a_0(h,\lambda),\ldots, a_4(h,\lambda)$, $\det H_1(h,\lambda)$, $\det H_2(h,\lambda)$ and $\det H_3(h,\lambda)$ (cf.\ Definition~\ref{def:hurwitz}) and show the following proposition:

\begin{proposition}\label{prop:analysis}
With the notation above, we have
\begin{enumerate}[(i)]
\item Network~\ref{eq:N1}: $\det H_1(h,\lambda)$ and $\det H_2(h,\lambda)$
  contain only positive monomials, $a_4(h,\lambda)$ and $\det H_3(h,\lambda)$
  contain monomials of both signs. But $\det H_3(h,\lambda)$ is positive
  whenever $a_4(h,\lambda)>0$ (Proposition~\ref{lem:no-hopf-N1} in
  Section~\ref{sec:network-N1}).
\item Network~(\ref{eq:N2}) and (\ref{eq:N3}): $\det H_1(h,\lambda)$,
  $\det H_2(h,\lambda)$ and $\det H_3(h,\lambda)$ contain only positive
  monomials, $a_4(h,\lambda)$ contains monomials of both signs. 
    Thus $\det H_i(h,\lambda)>0$, $i=1,2,3$ for positive $h$ and 
    $\lambda$ and in particular, $\det H_3(h,\lambda)\neq 0$.
\item Network~\ref{eq:N4}: $\det H_1(h,\lambda)$, $\det H_2(h,\lambda)$ and
  $\det H_3(h,\lambda)$ and $a_4(h,\lambda)$ contain only positive
  monomials.
    As in case of \ref{eq:N2} and \ref{eq:N3}, one has $\det
    H_i(h,\lambda)>0$, $i=1,2,3$ for positive $h$ and $\lambda$ and 
    in particular, $\det H_3(h,\lambda)\neq 0$. 
\end{enumerate} 
\end{proposition}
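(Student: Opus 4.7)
The plan is to handle all four networks by the same explicit scheme: first perform the setup required to write down $J_i(h,\lambda)$ via \eqref{eq:new-jac}, then compute $a_0,\ldots,a_4$ and the Hurwitz determinants $\det H_1,\det H_2,\det H_3$ symbolically, and finally inspect the monomial expansions. For each $i\in\{1,2,3,4\}$, I would write out the stoichiometric matrix $N_i$ and the kinetic order matrix $Y_i$, then compute the extreme rays $w_0,w_1$ of the flux cone $\{v\ge 0: N_i v=0\}$ (both are visibly two-dimensional since the reactions decompose into one cycle through $K$ and one through $F$). Substituting $E\lambda=w_0+\lambda w_1$ into \eqref{eq:new-jac} gives an $7\times 7$ matrix whose characteristic polynomial factors as $z^3(a_0z^4+a_1z^3+a_2z^2+a_3z+a_4)$ by Remark~\ref{re:rem7}(i)--(ii). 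From there the $H_j$ are assembled by Definition~\ref{def:hurwitz}, and all downstream objects are explicit polynomials in $\Q_{\geq 0}[\lambda,h_1,\ldots,h_7]$ up to signs.

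For the easy parts (ii) and (iii), the claim is that $\det H_1$, $\det H_2$, $\det H_3$ are polynomials in $\lambda,h_1,\ldots,h_7$ with only nonnegative coefficients (for \ref{eq:N4} the same is true of $a_4$). Once the symbolic expansions are in hand, this is a matter of inspecting the sign of each monomial. Since $(h,\lambda)\in\R^7_{>0}\times\R_{>0}$, sums of positive-coefficient monomials are automatically strictly positive, which immediately yields $\det H_j(h,\lambda)>0$ for $j=1,2,3$. This puts us in the second case of Theorem~\ref{thm:hopf} ($\det H_{s-1}\ne 0$ everywhere), and precludes a simple Hopf bifurcation.

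For part (i) on \ref{eq:N1}, inspection still suffices to show $\det H_1,\det H_2$ are sums of positive monomials. The nontrivial statement is the implication $a_4(h,\lambda)>0\Rightarrow\det H_3(h,\lambda)>0$ on the positive orthant, since both polynomials contain monomials of both signs. My plan is to establish a Positivstellensatz-type certificate of the form
\[
\det H_3(h,\lambda) \;=\; P(h,\lambda) \;+\; Q(h,\lambda)\, a_4(h,\lambda),
\]
where $P,Q\in\Q_{\geq 0}[\lambda,h_1,\ldots,h_7]$. Such an identity immediately yields the desired implication on the positive orthant. To find it I would treat $\det H_3$ as a polynomial in $\lambda$ (or in one chosen $h_k$) with coefficients in $\Q[\text{remaining variables}]$, divide coefficient-wise by the corresponding slice of $a_4$, and inspect whether the remainders are sums of positive monomials; if not, try regrouping or iterating the process. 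A more principled alternative is a linear-algebraic search: fix monomial supports for $P,Q$ compatible with total degree bookkeeping, and solve the resulting linear system over $\Q_{\geq 0}$ for the undetermined coefficients.

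The main obstacle is producing this decomposition for \ref{eq:N1}: $\det H_3$ and $a_4$ are large, and a naive polynomial division will not generically return a nonnegative quotient and remainder. In practice one needs to exploit structural features of the network -- e.g.\ the natural $\lambda$-grading of $w_0+\lambda w_1$ and the way each $h_k$ enters through a single row of $\diag(h)$ -- to guess the correct monomial support of $Q$ before searching for $P$. Once a candidate certificate is produced, verifying it is purely symbolic arithmetic, which is exactly the kind of check that the Maple file {\bf `SupplMat1.mw'} is meant to carry out.
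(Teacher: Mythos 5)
Your treatment of parts (ii) and (iii), and of the ``easy'' half of part (i) ($\det H_1$, $\det H_2$), coincides with the paper's: set up $J_i(h,\lambda)$ via \eqref{eq:new-jac}, expand the characteristic polynomial and the Hurwitz determinants symbolically, and observe that the relevant expressions are sums of monomials with positive coefficients, hence strictly positive on the positive orthant. That part is fine.

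The gap is in the only genuinely nontrivial claim, the implication $a_4(h,\lambda)>0\Rightarrow\det H_3(h,\lambda)>0$ for network \ref{eq:N1}. You propose to certify it by a single global identity $\det H_3 = P + Q\,a_4$ with $P,Q$ having nonnegative coefficients, but you never produce such $P$ and $Q$, and nothing guarantees they exist: the implication ``$a_4>0$ on the orthant implies $\det H_3>0$'' does not, by any Positivstellensatz applicable here (the orthant is noncompact and $\det H_3$ is not strictly positive on its closure), force a representation that is linear in $a_4$ with globally nonnegative multipliers. The paper's actual proof (Proposition~\ref{lem:no-hopf-N1}) strongly suggests no such one-shot polynomial certificate is available: after factoring $a_4=\lambda^2 c_0$ and regrouping $c_0$ as $(h_6-h_1)(h_2+h_5+h_7)h_3h_4+(h_7-h_2)(h_1+h_3+h_6)h_4h_5+(h_6h_7-h_1h_2)h_3h_5$, it splits the orthant by the signs of $h_6-h_1$, $h_7-h_2$ and $h_3-h_5$, performs substitutions such as $h_6=h_1+v_1$, $h_7=h_2+v_2$, $h_3=h_5+v_3$ to turn each region into a positive orthant in new variables, and in the one remaining hard case must additionally use $c_0>0$ to bound $v_1$ by a \emph{rational} function $z_1$ of the other parameters and substitute $v_1=\tfrac{\mu}{\mu+1}z_1$ before the coefficients of (the numerator of) $\det H_3$ all become positive. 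The appearance of a rational, case-dependent substitution is exactly the kind of obstruction your linear-algebraic search over fixed monomial supports would run into. So your plan identifies the right target and the right verification primitive (coefficient-sign checks after substitution), but the central step of part (i) remains unestablished as written; you would need either to exhibit the certificate explicitly or to fall back on a region-by-region analysis of the kind the paper carries out.
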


In particular, this proposition (which is proven below) tells us that in all four networks,  $\det H_3(h,\lambda)\neq 0$ whenever $a_4(h,\lambda)>0$.
As a consequence we obtain the following theorem.  

\begin{theorem}\label{thm:main}
  For the networks \ref{eq:N1} -- \ref{eq:N4} there do not exist
  rate constants $\k$ and a corresponding positive steady state $x^*$
  such that the Jacobian matrix  $J_i(\k,x^*)$ has a pair of purely
  imaginary eigenvalues. Thus, in particular, there do not exist $\k$
  and $x^*$ where a Hopf bifurcation occurs.  
\end{theorem}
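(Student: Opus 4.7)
The plan is to derive Theorem~\ref{thm:main} as a direct consequence of Proposition~\ref{prop:analysis} together with the reduction to convex parameters in Proposition~\ref{prop:correspondence} and the Hurwitz criterion in Theorem~\ref{thm:hopf}. By Proposition~\ref{prop:correspondence}, the family of Jacobian matrices $J_i(\kappa,x^*)$ at all positive steady states $x^*$ (for all admissible rate constants $\kappa$) coincides with the family of matrices $J_i(h,\lambda)$ defined by \eqref{eq:new-jac} as $(h,\lambda)$ ranges over $\R^7_{>0}\times\R^2_{\geq 0}$ with $E\lambda\in\R^6_{>0}$. By the scaling argument on $\lambda$ given before the statement of Proposition~\ref{prop:analysis}, it is enough to consider $\lambda$ of the form $w_0+\lambda w_1$ with $\lambda>0$, reducing the parameter space to $(h_1,\dots,h_7,\lambda)\in\R^8_{>0}$.

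Since each network has $n=7$ species and the stoichiometric matrix has rank $s=4$ (Remark~\ref{re:rem7}), the characteristic polynomial of $J_i(h,\lambda)$ factors as $z^3\,\bigl(a_0(h,\lambda)z^4+\cdots+a_4(h,\lambda)\bigr)$, and Theorem~\ref{thm:hopf} applies to the degree-$4$ factor. With $s=4$ the hypotheses of the theorem reduce to: (a) $\det H_1(h,\lambda)>0$ and $\det H_2(h,\lambda)>0$ for all positive parameters, and (b) either $\det H_3(h,\lambda)\neq 0$ for all positive parameters, or $a_4(h,\lambda)\leq 0$ whenever $\det H_3(h,\lambda)=0$.

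For networks \ref{eq:N2}, \ref{eq:N3}, \ref{eq:N4}, Proposition~\ref{prop:analysis}(ii)--(iii) immediately delivers condition (a) and the first alternative in (b), since $\det H_1,\det H_2,\det H_3$ all have expansions in the convex parameters containing only positive monomials, and therefore are strictly positive on the open positive orthant. For network \ref{eq:N1}, condition (a) is again given by Proposition~\ref{prop:analysis}(i). To verify condition (b) we argue by contradiction: suppose $\det H_3(h,\lambda)=0$ at some positive $(h,\lambda)$. By Proposition~\ref{prop:analysis}(i), $\det H_3(h,\lambda)>0$ whenever $a_4(h,\lambda)>0$, so we must have $a_4(h,\lambda)\leq 0$, which is exactly the first alternative of (b).

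In all four cases, Theorem~\ref{thm:hopf} therefore precludes any pair of purely imaginary eigenvalues for the nontrivial part of the characteristic polynomial, and hence no simple Hopf bifurcation is possible. By Proposition~\ref{prop:correspondence}, the same conclusion transfers back to the original parameterization in $(\kappa,x^*)$, yielding the theorem. The only nontrivial step is the proof of Proposition~\ref{prop:analysis}, in particular the implication $a_4>0\Rightarrow\det H_3>0$ for \ref{eq:N1}; I expect this to be the main obstacle, as both polynomials contain monomials of mixed sign and the implication has to be established by an explicit symbolic manipulation, which is deferred to the supplementary Maple file \textbf{`SupplMat1.mw'}.
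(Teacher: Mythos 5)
Your proposal is correct and follows essentially the same route as the paper: reduce to convex parameters via Proposition~\ref{prop:correspondence}, apply Theorem~\ref{thm:hopf} to the degree-$4$ factor of the characteristic polynomial, and discharge the sign conditions by Proposition~\ref{prop:analysis} (with the contrapositive of $a_4>0\Rightarrow\det H_3>0$ handling network $\mathcal{N}_1$, exactly as the paper does via Proposition~\ref{lem:no-hopf-N1}). The only blemish is a harmless labeling slip where you call the condition ``$a_4\leq 0$ whenever $\det H_3=0$'' the first alternative of your condition (b) after having listed it second.
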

\begin{proof}
  By Proposition~\ref{prop:analysis} and the correspondence between the two parameterization of the Jacobian given in   Proposition~\ref{prop:correspondence}, there do not exist  $\k$ and a corresponding positive steady state $x^*$  such that the corresponding Hurwitz determinant $\det H_3(\k,x^*)$ vanishes and  the coefficient of lowest degree of the characteristic polynomial $a_4(\k,x^*)$ is positive. 
Further, for all networks  $\det H_1(\k,x^*)>0$   and $\det H_2(\k,x^*)>0$ for all $\k,x^*$. Hence, by 
Theorem~\ref{thm:hopf}, the Jacobian matrix does not have a pair of purely imaginary eigenvalues. 
\end{proof}

\begin{remark}
By \cite{sadeghimanesh:multi}, the networks $\mathcal{N}_1$, $\mathcal{N}_2$ and $\mathcal{N}_3$ are multistationary, while $\mathcal{N}_4$ is not. Hence, by  \cite{FeliuPlos}, the coefficient of lowest degree of the Jacobian must vanish for some $\k$ and positive steady state $x^*$,  and consistently $a_4(h,\lambda)$ must contain monomials of both signs. 
\end{remark}

\begin{remark}
  We have focused on networks with two intermediates, as network
  (\ref{eq:N1}) provides the first non-trivial case and might shed
  light on how to approach the full network. All simplifications of
  the full network \eqref{eq:network} obtained after removing three
  intermediates, that is, with only one intermediate left, satisfy
  that $\det H_{s-1}(h,\lambda)$ is a sum of positive terms and hence
  does not vanish. Consequently, Hopf bifurcations do not arise.
\end{remark}

All that remains is to show Proposition~\ref{prop:analysis}. This is   done through 
mathematical reasoning aided by symbolic computations performed in Maple and Mathematica. 
In the following subsections we present for each network the
stoichiometric matrix $N_i$, the kinetic order matrix $Y_i$, the
matrix $E_i$ whose columns are vectors $w_0$ and $w_1$ that generate
the cone (\ref{eq:lin-problem}) and the Jacobian matrix
$J_i(h,\lambda)$.
For the networks~(\ref{eq:N1}) -- (\ref{eq:N4}) we found these vectors
simply by finding a basis of the kernel of $N$, which has dimension
$2$, and deriving extreme vectors. For larger networks software like
CellNetAnalyzer \cite{Kamp2017} can be used.
Where appropriate we then discuss the coefficient $a_4(h,\lambda)$ and
the determinants of the Hurwitz matrices $\det H_i(h,\lambda)$. The
computations are  given in the supplementary file {\bf
  `SupplMat1.mw'}.

\subsection{Network~\ref{eq:N1}}
\label{sec:network-N1}

We denote by $x_1,x_2,x_3,x_4,x_5,x_6,x_7$ the concentrations of $K$,
$F$, $S_0$, $S_1$, $S_2$, $KS_0$, $FS_2$ respectively.
Then the stoichiometric matrix, the kinetic order matrix and
the matrix $E_1$ are
\begin{displaymath}
  N_1 = \left[
    \begin{array}{rrrrrr}
      -1 & 1 & 0 & 0 & 0 & 0 \\
      0 & 0 & 0 & -1 & 1 & 0 \\
      -1 & 0 & 0 & 0 & 0 & 1 \\
      0 & 1 & -1 & 0 & 1 & -1 \\
      0 & 0 & 1 & -1 & 0 & 0 \\
      1 & -1 & 0 & 0 & 0 & 0 \\
      0 & 0 & 0 & 1 & -1 & 0 \\
    \end{array}
  \right], 
  Y_1 = \left[
    \begin{array}{cccccc}
      1 & 0 & 1 & 0 & 0 & 0 \\
      0 & 0 & 0 & 1 & 0 & 1 \\
      1 & 0 & 0 & 0 & 0 & 0 \\
      0 & 0 & 1 & 0 & 0 & 1 \\
      0 & 0 & 0 & 1 & 0 & 0 \\
      0 & 1 & 0 & 0 & 0 & 0 \\
      0 & 0 & 0 & 0 & 1 & 0 \\
    \end{array}
  \right] \text{ and } 
  E_1 = \left[
    \begin{array}{cc}
      1 & 0 \\
      1 & 0 \\
      0 & 1 \\
      0 & 1 \\
      0 & 1 \\
      1 & 0 \\
    \end{array}
  \right].
\end{displaymath}
Using convex parameters,   the Jacobian matrix in terms $h_1,\dots,h_7,\lambda$ as  given in \eqref{eq:new-jac} is \begin{displaymath}
  J_1(h,\lambda) = 
  \left[
    \begin{array}{ccccccc}
      -h_{1} & 0 & -h_{3} & 0 & 0 & h_{6} & 0 \\
      0 & -h_{2} \lambda & 0 & 0 & -h_{5} \lambda & 0 & h_{7} \lambda \\
      -h_{1} & h_{2} & -h_{3} & h_{4} & 0 & 0 & 0 \\
      -h_{1} \lambda & -h_{2} & 0 & h_{4} (-1-\lambda) & 0 & h_{6} & h_{7} \lambda \\
      h_{1} \lambda & -h_{2} \lambda & 0 & h_{4} \lambda & -h_{5} \lambda & 0 & 0 \\
      h_{1} & 0 & h_{3} & 0 & 0 & -h_{6} & 0 \\
      0 & h_{2} \lambda & 0 & 0 & h_{5} \lambda & 0 & -h_{7} \lambda \\
    \end{array}
  \right].
\end{displaymath}
We observe that the coefficient $a_4(h,\lambda)$ of the characteristic
polynomial contains monomials of both signs. 
We compute the associated Hurwitz determinants $\det H_1(h,\lambda)$,
$\det H_2(h,\lambda)$ and $\det H_3 (h,\lambda)$ and obtain that $\det H_1(h,\lambda)$ and
$\det H_2(h,\lambda)$ are sums of positive monomials and that $\det H_3(h,\lambda)$ contains monomials of both signs as well. Hence both
$a_4(h,\lambda)$ and $\det H_3(h,\lambda)$ 
contain monomials of both signs and can potentially be zero.

In the remainder of this section we prove the following:
\begin{proposition}
  \label{lem:no-hopf-N1}
  Consider the coefficient $a_4(h,\lambda)$ and the Hurwitz
  determinant $\det H_3(h,\lambda)$ of network ~\ref{eq:N1} and given in the file  {\bf `SupplMat1.mw'}. Then
  \begin{displaymath}
\mbox{if} \quad    a_4(h,\lambda) > 0, \quad \text{then}\quad \det H_3(h,\lambda) >0.
  \end{displaymath}
\end{proposition}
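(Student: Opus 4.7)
The plan is to establish an explicit algebraic certificate that makes the implication transparent. Concretely, I would look for polynomials $R(h,\lambda)$ and $S(h,\lambda)$ in $\R[h_1,\dots,h_7,\lambda]$, both with only nonnegative coefficients and with $R$ a nonzero polynomial, satisfying the identity
\begin{equation*}
  \det H_3(h,\lambda) \;=\; R(h,\lambda) \;+\; S(h,\lambda)\, a_4(h,\lambda).
\end{equation*}
Once such an identity is verified symbolically, the conclusion is immediate: for $(h,\lambda)\in\R_{>0}^8$, the polynomial $R$ is strictly positive (nonzero with nonnegative coefficients), and under the hypothesis $a_4(h,\lambda)>0$ the product $S\cdot a_4$ is nonnegative, so $\det H_3(h,\lambda)>0$.

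The first step is mechanical and relies on the Maple file. Using the Jacobian $J_1(h,\lambda)$ displayed above, expand the characteristic polynomial, factor out the $z^3$ coming from the three conservation laws (see Remark~\ref{re:rem7}(ii)), and read off $a_0,\dots,a_4$. Assemble $H_3$ according to Definition~\ref{def:hurwitz} and expand $\det H_3$. Next, split the two polynomials of interest into their positive and negative monomial parts, say $\det H_3 = P_+ - P_-$ and $a_4 = Q_+ - Q_-$ with $P_\pm, Q_\pm$ having nonnegative coefficients. The goal is to absorb $P_-$ using the contribution $S\cdot Q_-$ (which enters $\det H_3 - S\,a_4$ with a positive sign) while keeping $P_+ - S\cdot Q_+$ still nonnegative coefficientwise.

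The search for $S$ can be organized as follows. Fix a candidate support (Newton polytope) for $S$; then the condition that $\det H_3 - S\,a_4$ have only nonnegative coefficients becomes a system of linear inequalities in the coefficients of $S$, one per monomial in the expansion. Begin with $S$ a single monomial chosen to match the "worst" negative monomial of $\det H_3$ against a negative monomial of $a_4$, and iteratively enlarge the support of $S$ until every negative monomial of $\det H_3$ is dominated. Verification of the resulting identity is then a single \texttt{expand-and-simplify} check in Maple.

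The principal obstacle is finding the correct support for $S$: with eight variables and likely several hundred monomials, the linear feasibility problem is large, and a naive single-monomial $S$ almost certainly will not suffice. If a straightforward $S$ cannot be found, a fallback is to look for the slightly more general certificate $M(h,\lambda)\,\det H_3 = R + S\,a_4$, with an additional multiplier $M$ having nonnegative coefficients and positive on $\R_{>0}^8$; this still yields the desired implication since $M>0$ on the positive orthant. Once the certificate is in hand, the proof of Proposition~\ref{lem:no-hopf-N1} reduces to a sign inspection, and Theorem~\ref{thm:main} for \ref{eq:N1} follows from Theorem~\ref{thm:hopf} as explained immediately before the proposition.
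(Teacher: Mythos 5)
There is a genuine gap: your entire argument hinges on the existence of a certificate $\det H_3 = R + S\,a_4$ (or the multiplier variant $M\,\det H_3 = R + S\,a_4$) with $R$, $S$, $M$ having nonnegative coefficients, and you never establish that such a certificate exists --- you only describe how you would search for one. As written this is a research plan, not a proof. Worse, there are concrete reasons to doubt feasibility of the restricted shape you propose. First, such an identity would prove the strictly stronger claim that $\det H_3>0$ whenever $a_4\geq 0$ on the positive orthant (since $R>0$ there and $S\,a_4\geq 0$); that claim is not asserted anywhere and would require separate verification --- if there is a positive point with $a_4=0$ and $\det H_3\leq 0$, no certificate of your form can exist. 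Second, no Handelman/Positivstellensatz-type result guarantees a denominator-free, linear-in-$a_4$ certificate for an implication between two polynomials on the non-compact positive orthant. Third, the structure of the actual problem suggests coefficientwise domination fails: after the natural case splits, there remains a regime in which $\det H_3$ has coefficients of both signs and the hypothesis $a_4>0$ must be used as a \emph{bound on a parameter} (it is linear in the relevant variable there), not merely as a sign; encoding that requires a rational reparametrization of the admissible interval, which a single polynomial identity with nonnegative $S$ would have to flatten out. At a minimum you must exhibit the certificate (i.e., show your linear feasibility problem has a solution) before the argument counts as a proof.

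For comparison, the paper's proof is a guided case analysis rather than a certificate search. It factors out $\lambda^2$ from $a_4$ and rewrites the remaining $c_0$ as a combination of the differences $h_6-h_1$, $h_7-h_2$ and $h_6h_7-h_1h_2$, so that the case $h_1>h_6$, $h_2>h_7$ forces $c_0<0$ and can be discarded. Each surviving sign condition is then encoded by a substitution introducing a new nonnegative variable (e.g.\ $h_6=h_1+v_1$, $h_7=h_2+v_2$, later $h_3=h_5+v_3$), after which $\det H_3$ is checked to have only positive coefficients --- except in one hard subcase, where $c_0>0$ is solved for $v_1$, giving $v_1<z_1$ with $z_1$ a rational function, and the substitution $v_1=\tfrac{\mu}{\mu+1}z_1$ reduces positivity of $\det H_3$ to inspecting the signs of the coefficients of a numerator; the last case follows by the $K\leftrightarrow F$ symmetry. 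If you can actually produce your certificate the proof would be clean and arguably more transparent than the paper's, but you should be prepared for the search to fail and to fall back on a case analysis of exactly this kind.
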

\begin{proof}
The coefficient $a_4(k,\lambda) $  of the Jacobian is 
\begin{multline*}
 -\lambda^{2} \big( h_{1}h_{2}h_{3}h_{4}+h_{1}h_{2}h_{3}h_{5}+h_{1}h_{2}h_{4}h_{5}+h_{1}h_{3}h_{4}h_{5}+h_{1}h_{3}h_{4}h_{7}-h_{1}h_{4}h_{5}h_{7}+h_{2}h_{3}
h_{4}h_{5} \\ -h_{2}h_{3}h_{4}h_{6}+h_{2}h_{4}h_{5}h_{6}-h_{3}h_{4}h_{5}h_{6}-h_{3}h_{4}h_{5}h_{7}-h_{3}h_{4}h_{6}h_{7}-h_{3}h_{5}h_{6}h_{7}-h_{4}h_{5}h_{6}h_{7} \big).
\end{multline*}
Since $\lambda^2$ factors out and it does not affect the sign, we consider
\begin{multline*} c_0 := - h_{1}h_{2}h_{3}h_{4}-h_{1}h_{2}h_{3}h_{5}-h_{1}h_{2}h_{4}h_{5}-h_{1}h_{3}h_{4}h_{5}-h_{1}h_{3}h_{4}h_{7}+h_{1}h_{4}h_{5}h_{7}-h_{2}h_{3}h_{4}h_{5} \\ + h_{2}h_{3}h_{4}h_{6} - h_{2}h_{4}h_{5}h_{6} + h_{3}h_{4}h_{5}h_{6} + h_{3}h_{4}h_{5}h_{7} + h_{3}h_{4}h_{6}h_{7} + h_{3}h_{5}h_{6}h_{7} + h_{4}h_{5}h_{6}h_{7}.
\end{multline*}

We show that $c_0>0$ implies $\det H_3>0$, omitting the argument of $H_3$. The computations are performed in Maple, but we explain here the computational procedure for the proof.

We start by noting that $c_0$ can be written as:
$$b_0 =  ( h_{6}-h_{1} )( h_{2}+h_{5}+h_{7}) h_{3}h_{4} + ( h_{7}-h_{2}) ( h_{1}+h_{3}+h_{6} )
 h_{4}h_{5} +( h_{6}h_{7}-h_{1}h_{2} ) h_3h_{5} .$$
We see immediately that if $h_1>h_6$ and $h_2>h_7$, then $c_0<0$. We do not need to study this case. 

We consider the case $h_6\geq h_1$ and $h_7\geq h_2$, such that one of the two inequalities is strict, otherwise $c_0=0$. 
 In this case $c_0\geq 0$. 
We introduce new nonnegative  parameters $v_1,v_2$ and substitute $h_6=h_1+v_1$ and $h_7=h_2+v_2$. This encodes the inequalities. We perform this substitution into $\det H_3$ using Maple, expand the new polynomial, and check the sign of the coefficients. All coefficients in $\det H_3$ are positive, meaning that $\det H_3$ will be positive in this case. This holds if $v_1,v_2>0$ or if one of the two parameters is set equal to zero. 

The latter means that  we need to study the case $h_1\geq h_6$ and $h_2\leq h_7$. We now perform the substitution
 $h_1=h_6+v_1$ and $h_7=h_2+v_2$, again with one of $v_1$ or $v_2$ nonzero. 
We observe that if $h_5\geq h_3$, then again $\det H_3$ is positive. So we restrict to $h_3>h_5$ and perform the substitution $h_3=h_5+v_3$ into $\det H_3$. 

When we do that, $\det H_3$ has coefficients of both signs, and therefore the sign is not clear. We have still to impose $c_0>0$ for this scenario. 
We perform the substitutions into $c_0$ and obtain:
\begin{multline*}
c_0 =\left( -2\,h_{2}h_{4}h_{5}-2\,h_{2}h_{4}v_{3}-h_{2}h_5^{2}-h_{5}h_{2}v_{3}-h_4h_5^{2}-h_{4}h_{5}v_{3}-h_{4}v_{2}v_{3} \right) v_{1} \\+h_{4}h_{5}^{2}v_{2}
+2\,h_{4}h_{5}h_{6}v_{2}+h_{4}h_{5}v_{2}v_{3}+h_{5}^{2}h_{6}v_{2}+h_{5}h_{6}v_{2}v_{3}
\end{multline*}
For $c_0$ to be positive, we need $v_1$ to be smaller than the root of $c_0$ seen as a polynomial in $v_1$, which is:
$$z_1 := \frac{h_{5}v_{2} \left( h_{4}h_{5}+2\,h_{4}h_{6}+h_{4}v_{3}+h_{6}h_{5}+h_{6}v_{3} \right) }{2\,h_{2}h_{4}h_{5}+2\,h_{2}h_{4}v_{3}+h_{2}h_5^{2}+h_{2}h_{5}v_{3}+h_{4}h_{5}^{2}+h_{4}h_{5}v_{3}+h_{4}v_{2}v_{3}}. $$
Now, we need to check whether $\det H_3$ can be negative when $v_1$ is smaller than $z_1$. To check that, we make the substitution 
$$ v_1 = \frac{\mu}{\mu+1} z_1.$$
Then any number in the interval $[0,z_1)$ is of this form for some nonnegative $\mu$.
We perform this substitution in Maple, gather the numerator of the resulting $\det H_3$, and confirm that all signs are positive. Further $\det H_3$ is positive even if some of $\mu,v_2,v_3$ are zero.

The other case $h_2\geq h_7$ and $h_1\leq h_6$ is analogous by the symmetry of the system. 
This finishes the argument, since we have explored all possibilities for $c_0>0$, and they all give that $\det H_3$ is positive. 
\end{proof}

\subsection{Network \ref{eq:N2}}
\label{sec:network-N2}

We use the following ordering: $x_1,x_2,x_3,x_4,x_5,x_6,x_7$ for the concentration of $K,F,S_0,S_1,S_2,K S_0,F S_1$ respectively. Under
this ordering the stoichiometric matrix, the kinetic order matrix and
the matrix $E_2$ are
\begin{displaymath}
  N_2 = \left[
    \begin{array}{rrrrrr}
      -1 & 1 & 0 & 0 & 0 & 0 \\
      0 & 0 & 0 & 0 & -1 & 1 \\
      -1 & 0 & 0 & 0 & 0 & 1 \\
      0 & 1 & -1 & 1 & -1 & 0 \\
      0 & 0 & 1 & -1 & 0 & 0 \\
      1 & -1 & 0 & 0 & 0 & 0 \\
      0 & 0 & 0 & 0 & 1 & -1 \\
    \end{array}
  \right], 
  Y_2 = \left[
    \begin{array}{cccccc}
      1 & 0 & 1 & 0 & 0 & 0 \\
      0 & 0 & 0 & 1 & 1 & 0 \\
      1 & 0 & 0 & 0 & 0 & 0 \\
      0 & 0 & 1 & 0 & 1 & 0 \\
      0 & 0 & 0 & 1 & 0 & 0 \\
      0 & 1 & 0 & 0 & 0 & 0 \\
      0 & 0 & 0 & 0 & 0 & 1 \\
    \end{array}
  \right] \text{ and }
  E_2 = \left[
    \begin{array}{cc}
      1 & 0 \\
      1 & 0 \\
      0 & 1 \\
      0 & 1 \\
      1 & 0 \\
      1 & 0 \\
    \end{array}
  \right].
\end{displaymath}
With this parametrization, the Jacobian of the system evaluated at a
steady state defined by $(h_1,\dots,h_7,\lambda)$ is: 
\begin{displaymath}
  J_2(h,\lambda) = \left[
    \begin{array}{ccccccc}
      -h_{1} & 0 & -h_{3} & 0 & 0 & h_{6} & 0 \\
      0 & -h_{2} & 0 & -h_{4} & 0 & 0 & h_{7} \\
      -h_{1} & 0 & -h_{3} & 0 & 0 & 0 & h_{7} \\
      -h_{1} \lambda & h_{2} (-1+\lambda) & 0 & h_{4} (-1-\lambda) & h_{5} \lambda & h_{6} & 0 \\
      h_{1} \lambda & -h_{2} \lambda & 0 & h_{4} \lambda & -h_{5} \lambda & 0 & 0 \\
      h_{1} & 0 & h_{3} & 0 & 0 & -h_{6} & 0 \\
      0 & h_{2} & 0 & h_{4} & 0 & 0 & -h_{7} \\
    \end{array}
  \right].
\end{displaymath}
 $\det H_3(h,\lambda)$ contains only positive
  monomials and in particular it does not vanish for any positive $h,\lambda$.

  \subsection{Network \ref{eq:N3}}
\label{sec:network-N3}
We use the following ordering: $x_1,x_2,x_3,x_4,x_5,x_6,x_7$ for the concentration of $K,F,S_0,S_1,S_2,K S_0,K S_1$ respectively. 
Under
this ordering the stoichiometric matrix, the kinetic order matrix and
the matrix $E_3$ are 
\begin{displaymath}
  N_3 = \left[
    \begin{array}{rrrrrr}
      -1 & 1 & -1 & 1 & 0 & 0 \\
      0 & 0 & 0 & 0 & 0 & 0 \\
      -1 & 0 & 0 & 0 & 0 & 1 \\
      0 & 1 & -1 & 0 & 1 & -1 \\
      0 & 0 & 0 & 1 & -1 & 0 \\
      1 & -1 & 0 & 0 & 0 & 0 \\
      0 & 0 & 1 & -1 & 0 & 0 \\
    \end{array}
  \right], 
  Y_3 = \left[
    \begin{array}{cccccc}
      1 & 0 & 1 & 0 & 0 & 0 \\
      0 & 0 & 0 & 0 & 1 & 1 \\
      1 & 0 & 0 & 0 & 0 & 0 \\
      0 & 0 & 1 & 0 & 0 & 1 \\
      0 & 0 & 0 & 0 & 1 & 0 \\
      0 & 1 & 0 & 0 & 0 & 0 \\
      0 & 0 & 0 & 1 & 0 & 0 \\
    \end{array}
  \right] \text{ and }
  E_3 = \left[
    \begin{array}{cc}
      1 & 0 \\
      1 & 0 \\
      0 & 1 \\
      0 & 1 \\
      0 & 1 \\
      1 & 0 \\
    \end{array}
  \right].
\end{displaymath}
With this parametrization, the Jacobian of the system evaluated at a
steady state defined by $(h_1,\dots,h_7,\lambda)$ is: 
\begin{displaymath}
  J_3(h,\lambda) = \left[
    \begin{array}{ccccccc}
      h_{1} (-1-\lambda) & 0 & -h_{3} & -h_{4} \lambda & 0 & h_{6} & h_{7} \lambda \\
      0 & 0 & 0 & 0 & 0 & 0 & 0 \\
      -h_{1} & h_{2} & -h_{3} & h_{4} & 0 & 0 & 0 \\
      -h_{1} \lambda & h_{2} (-1+\lambda) & 0 & h_{4} (-1-\lambda) & h_{5} \lambda & h_{6} & 0 \\
      0 & -h_{2} \lambda & 0 & 0 & -h_{5} \lambda & 0 & h_{7} \lambda \\
      h_{1} & 0 & h_{3} & 0 & 0 & -h_{6} & 0 \\
      h_{1} \lambda & 0 & 0 & h_{4} \lambda & 0 & 0 & -h_{7} \lambda \\
    \end{array}
  \right].
\end{displaymath}
 $\det H_3(h,\lambda)$ contains only positive
  monomials and in particular it does not vanish for any positive $h,\lambda$.

\subsection{Network \ref{eq:N4}}
\label{sec:network-N4}
We use the following ordering: $x_1,x_2,x_3,x_4,x_5,x_6,x_7$ for the concentration of $K,F,S_0,S_1,S_2,K S_1,F S_1$ respectively. 
Under this ordering the stoichiometric matrix, the kinetic order
matrix and the matrix $E_4$ are 
\begin{displaymath}
  N_4 = \left[
    \begin{array}{rrrrrrr}
      0 & -1 & 1 & 0 & 0 & 0 \\
      0 & 0 & 0 & 0 & -1 & 1 \\
      -1 & 0 & 0 & 0 & 0 & 1 \\
      1 & -1 & 0 & 1 & -1 & 0 \\
      0 & 0 & 1 & -1 & 0 & 0 \\
      0 & 1 & -1 & 0 & 0 & 0 \\
      0 & 0 & 0 & 0 & 1 & -1 \\
    \end{array}
  \right], 
  Y_4 = \left[
    \begin{array}{cccccc}
      1 & 1 & 0 & 0 & 0 & 0 \\
      0 & 0 & 0 & 1 & 1 & 0 \\
      1 & 0 & 0 & 0 & 0 & 0 \\
      0 & 1 & 0 & 0 & 1 & 0 \\
      0 & 0 & 0 & 1 & 0 & 0 \\
      0 & 0 & 1 & 0 & 0 & 0 \\
      0 & 0 & 0 & 0 & 0 & 1 \\
    \end{array}
  \right] \text{ and }
  E_4 = \left[
    \begin{array}{cc}
      1 & 0 \\
      0 & 1 \\
      0 & 1 \\
      0 & 1 \\
      1 & 0 \\
      1 & 0 \\
    \end{array}
  \right].
\end{displaymath}
With this parametrization, the Jacobian of the system evaluated at a
steady state defined by $(h_1,\dots,h_7,\lambda)$ is: 
\begin{displaymath}
  J_4(h,\lambda) = \left[
    \begin{array}{ccccccc}
      -h_{1} \lambda & 0 & 0 & -h_{4} \lambda & 0 & h_{6} \lambda & 0 \\
      0 & -h_{2} & 0 & -h_{4} & 0 & 0 & h_{7} \\
      -h_{1} & 0 & -h_{3} & 0 & 0 & 0 & h_{7} \\
      h_{1} (1-\lambda) & h_{2} (-1+\lambda) & h_{3} & h_{4} (-1-\lambda) & h_{5} \lambda & 0 & 0 \\
      0 & -h_{2} \lambda & 0 & 0 & -h_{5} \lambda & h_{6} \lambda & 0 \\
      h_{1} \lambda & 0 & 0 & h_{4} \lambda & 0 & -h_{6} \lambda & 0 \\
      0 & h_{2} & 0 & h_{4} & 0 & 0 & -h_{7} \\
    \end{array}
  \right].
\end{displaymath}
 $\det H_3(h,\lambda)$ contains only positive
  monomials and in particular it does not vanish for any positive $h,\lambda$.

\section{Discussion and Outlook: Does the full network admit Hopf bifurcations?}
\label{sec:discussion}

  As discussed in the Introduction, in \cite[Section~5.36]{osc-010} a
  simplification of the mass action model of Fig.~\ref{fig:cycles-dd} is
  examined and the authors provide steady state concentration values
  and rate constants of a candidate Hopf bifurcation point. Here we
  want to explain how this point fails to give a pair of purely imaginary eigenvalues (Proposition
  \ref{prop:yang}). 
 
  In \cite{osc-010}, the authors consider the following irreversible
  version of the network~\ref{eq:network}:
  \begin{equation}
    \tag{$\mathcal{N}_{f}$}
    \label{eq:network_ir}
    \begin{split}
      S_0 + K \ce{->[\k_1]} KS_0 \ce{->[\k_3]} S_1+K
      \ce{->[\k_4]} KS_1 \ce{->[\k_6] } S_2+K \\
      S_2 + F  \ce{->[\k_7]} FS_2 \ce{->[\k_9]} S_1+F
      \ce{->[\k_{10}]} FS_1 \ce{->[\k_{12}]} S_0+F.
    \end{split}
  \end{equation}
  In network~\ref{eq:network_ir} the constants $\k_2$, $\k_5$, $\k_8$
  and $\k_{11}$ describing the `backward' reactions in the
  full network~\ref{eq:network} have been assigned the value
  zero. Hence the matrix $E_f$ whose columns generate the
  cone~(\ref{eq:lin-problem}) is
  \begin{displaymath}
    E_f^T=
    \left[
      \begin{array}{cccccccccccc}
        1 & 0 & 1 & 0 & 0 & 0 & 0 & 0 & 0 & 1 & 0 & 1 \\
        0 & 0 & 0 & 1 & 1 & 1 & 1 & 1 & 1 & 0 & 0 & 0
      \end{array}
    \right].
  \end{displaymath}
  Using  $x_1,x_2,x_3,x_4,x_5,x_6,x_7,x_8,x_9$ for the concentrations of $S_0,K, KS_0,S_1, KS_1, S_2,F, FS_2, FS_1$ respectively one obtains the Jacobian matrix
  \begin{displaymath}
    \tiny
    J_f(h,\lambda)= 
    \left[
      \begin{array}{ccccccccc}
        -h_{1} \lambda_{1} & -h_{2} \lambda_{1} & 0 & 0 & 0 & 0 & 0 & 0 & h_{9} \lambda_{1} \\
        -h_{1} \lambda_{1} & h_{2} (-\lambda_{1}-\lambda_{2}) & h_{3} \lambda_{1} & -h_{4} \lambda_{2} & h_{5} \lambda_{2} & 0 & 0 &0 & 0 \\
        h_{1} \lambda_{1} & h_{2} \lambda_{1} & -h_{3} \lambda_{1} & 0 & 0 & 0 & 0 & 0 & 0 \\
        0 & -h_{2} \lambda_{2} & h_{3} \lambda_{1} & h_{4} (-\lambda_{1}-\lambda_{2}) & 0 & 0 & -h_{7} \lambda_{1} & h_{8} \lambda_{2}& 0 \\
        0 & h_{2} \lambda_{2} & 0 & h_{4} \lambda_{2} & -h_{5} \lambda_{2} & 0 & 0 & 0 & 0 \\
        0 & 0 & 0 & 0 & h_{5} \lambda_{2} & -h_{6} \lambda_{2} & -h_{7} \lambda_{2} & 0 & 0 \\
        0 & 0 & 0 & -h_{4} \lambda_{1} & 0 & -h_{6} \lambda_{2} & h_{7} (-\lambda_{1}-\lambda_{2}) & h_{8} \lambda_{2} & h_{9} \lambda_{1}\\
        0 & 0 & 0 & 0 & 0 & h_{6} \lambda_{2} & h_{7} \lambda_{2} & -h_{8} \lambda_{2} & 0 \\
        0 & 0 & 0 & h_{4} \lambda_{1} & 0 & 0 & h_{7} \lambda_{1} & 0 & -h_{9} \lambda_{1} \\
      \end{array}
    \right].
  \end{displaymath}
  The authors of \cite{osc-010} provide the candidate point where
  $\det H_{5}(h^*,\lambda^*)=0$:
  \begin{align*}
    h_{1}^* &=  9.15394021721585 \cdot 10^{-6}  &
    h_{2}^* &=  8.438690345203897 \cdot 10^{-6} &
    h_{3}^* &=  9.15394021721585 \cdot 10^{-6}  \\
    h_{4}^* &=  9.15394021721585 \cdot 10^{-6}  &
    h_{5}^* &=  0.0000234589 &
    h_{6}^* &=  0.00391442 \\
    h_{7}^* &=  0.0625077 &
    h_{8}^* &=  0.00391442 &
    h_{9}^* &=  1 \\
    \lambda_{1}^* &=  1 &
    \lambda_{2}^* &=  1.
  \end{align*}
We find that
  \begin{align*}
    \det H_1(h^*,\lambda^*) &=  \phantom{-}1.13292 &
    \det H_2(h^*,\lambda^*) &=  0.0803339 \\
    \det H_3(h^*,\lambda^*) &=  \phantom{-}1.7440236556291417 \cdot 10^{-6} &
    \det H_4(h^*,\lambda^*) &=  2.5421805536611004 \cdot 10^{-15} \\
    \det H_5(h^*,\lambda^*) &= -1.3900880766102185 \cdot 10^{-42} \\
    a_6(h^*,\lambda^*) &= -1.682281311658486 \cdot 10^{-18}.
  \end{align*}
  While $\det H_4(h^*,\lambda^*)$, $\det H_5(h^*,\lambda^*)$ and
  $a_6(h^*,\lambda^*)$ are all very small numbers, $\det
  H_5(h^*,\lambda^*)$ is by orders of magnitude smaller. And if
  $10^{-42} \approx 0$, then $\det H_4(h^*,\lambda^*)$ and
  $a_6(h^*,\lambda^*)$ are  nonzero. 
 In particular $a_6(h^*,\lambda^*)<0$ and hence this point gives rise
 to a pair of symmetric real eigenvalues.
 We have additionally verified our claim by considering
 the set of corresponding eigenvalues (computed with Mathematica):
 \begin{displaymath}
   \begin{split}
     \big\{ 
     &-1.06643, -0.0661828, -0.000208571, -0.0000990197, 0.0000339721,
     -0.0000339721, \\
     &-4.701752723492088\cdot 10^{-16}, 1.2668321503904372\cdot 10^{-17},
     1.495648385287835\cdot 10^{-18}
     \big\}.
   \end{split}
 \end{displaymath}
 All eigenvalues are real numbers, three eigenvalues are $\approx 0$
 (as expected as the system has three conservation relations) and the
 real values $0.0000339721$ and $-0.0000339721$ give a pair of symmetric eigenvalues (up
 to $10$ digits).
 
 Hence, this justifies our claim that the question of
 whether or not a Hopf bifurcation can occur in the full
 network~\eqref{eq:network} is still open. For the full network,
 $\det H_1,\dots,\det H_{s-2}$ are all positive, and hence the
 problem is reduced to making $\det H_{s-1}$ vanish while having
 $a_s$ positive. Network \eqref{eq:N1} is the smallest simplified
 network where both $\det H_{s-1}$ and $a_s$ attain both signs and
 hence poses the same challenge.
 
 For networks admitting pairs of purely imaginary eigenvalues, a
 typical approach in this situation is to find values for which
 $\det H_{s-1}$ vanishes, and then verify that $a_s$ is positive
 when evaluated at  these values. But if the two conditions happen
 to be incompatible, as it might be the case for the full
 network~\eqref{eq:network}, then it is unclear how to proceed.
 Automated approaches \cite{osc-010} cannot handle the analysis of
 the signs of these two large polynomials at the same time. Here,
 by combining guided heavy symbolic computations with manual
 intervention based on the visual inspection of $a_s$, we have
 managed to prove that the two sign conditions are incompatible for
 network \eqref{eq:N1}.

 The idea is to reduce the problem into checking the sign of the
 coefficients of a polynomial. We first break the condition $a_s>0$
 into subcases involving simple inequalities between the
 parameters. Then, one uses the inequalities to substitute one
 parameter in $H_{s-1}$ by a new parameter that is allowed to take
 any positive value (e.g. $k_1>k_2$ leads to  the substitution
 $k_1=k_2+a$, such that the new constraints are
 $k_2,a>0$). Finally, $H_{s-1}$ is transformed into a polynomial
 (or rational function) that only needs to be evaluated at positive
 values to guarantee $a_s>0$. If all coefficients of $H_{s-1}$ are
 positive, then trivially the polynomial is positive. If some
 coefficients are negative, then further exploration is required.
    
 This strategy has worked nicely for network \eqref{eq:N1}. However,
 substitutions of the type $k_1=k_2+a$ typically increase
 dramatically the number of terms as a polynomial of the form
 $k_1^d\, p(\k,x)$ becomes $(k_2+a)^d\, p(\k,x)$. We also performed
 substitutions of the form $k_1 = \tfrac{\mu}{\mu+1}
 \tfrac{p(\k,x)}{q(\k,x)}$ with $\mu>0$, which introduced large
 denominators.

 All this illustrates the difficulty in the analysis of the full
 network: the computation of the Hurwitz determinants is very
 demanding, but a posterior analysis, in par with the analysis of
 network \eqref{eq:N1}, is prohibitive. Nevertheless, we think ideas
 introduced here in the analysis of network \eqref{eq:N1} might be
 applied to other networks posing similar challenges, and maybe
 similar ideas end up being successful for the full network either
 after increasing computational power or by depicting new strategies
 to simplify computations.

 As explained in Remark~\ref{rk:simplify}, the existence of periodic
 orbits in simplified models can be lifted to more complex models
 under certain network modifications
 \cite{Banaji18:oscillations}. However, the non-existence of periodic
 orbits or Hopf bifurcations does not, in principle, tell us anything
 about complex models including the simplified models in some
 form. Considering the demanding computational cost involved in
 describing the dynamics of a reaction network for arbitrary
 parameter values, it would be of great help to have a better
 understanding of what type of network operations or parameter
 regimes guarantee that the non-existence of a behavior in a
 simplified model is preserved in the more complex model. For
 example, we are not aware of any result nor reasoning we could
 employ to ensure that our conclusions on networks
 \eqref{eq:N1}-\eqref{eq:N4} are preserved after introducing
 unbinding reactions with a small rate constant.

\medskip
\subsection*{Supplementary Material}
The Maple computations for the proof of Theorem~\ref{thm:main} are provided in the file {\bf `SupplMat1.mw'}. For the convenience of the readers without access to Maple, we provide as well as pdf version of the file.

\bigskip
\subsection*{Acknowledgements} CC acknowledges funding from Deutsche
Forschungsgemeinschaft, 284057449. EF   acknowledges funding from the
Danish Research Council for Independent Research.

The authors thank Ang{\'e}lica Torres for providing source code to
compute Hurwitz determinants
and Alan Rendall for helpful discussions about (simple) Hopf
bifurcations.
The authors thank the hospitality Erwin Schr{\"o}dinger International
Institute for Mathematics and Physics in Vienna, where this project
was started during the workshop `Advances in Chemical Reaction Network
Theory'.

\small 
\bibliographystyle{plain}
\bibliography{crnt}
 \end{document}